\documentclass[a4paper,english,numberwithinsect]{arxive_cls}

\newcommand{\revision}[1]{\textcolor{black}{#1}}

\usepackage{comment}
\usepackage{graphicx}    % for \includegraphics
\usepackage[percent]{overpic}
\usepackage{cleveref}

\newcommand{\ve}		{{\varepsilon}}
\newcommand{\Toneh} {{\hat{T}_1^f}}
\newcommand{\Ttwoh} {{\hat{T}_2^g}}
\newcommand{\Tone} {{T_1^f}}
\newcommand{\Ttwo} {{T_2^g}}
\newcommand{\Tthree} {{T_3^k}}
\usepackage{algorithm} % Needed for Algorithm
\usepackage{algpseudocode}
\usepackage{subcaption}
\newtheorem{observation}{Observation}
\newtheorem{claim}{Claim}

\usepackage{microtype} % if unwanted, comment out

\title{Towards Computing Average Merge Tree Based on the Interleaving Distance\thanks{\revision{
   We thank Prof. Aasa Feragen (Denmark) for providing the data used in this paper’s evaluation. This work was partially supported by the Swedish Research Council (VR) grant 2023-04806; the Swedish e-Science Research Center (SeRC); and the Wallenberg Autonomous Systems and Software Program (WASP) funded by the Knut and Alice Wallenberg Foundation.}}}
\titlerunning{Computing Average Merge Tree Based on the Interleaving Distance}

\author[1]{Elena Farahbakhsh Touli}
\author[1]{Ingrid Hotz}
\author[1]{Talha Bin Masood}

\affil[1]{Scientific Visualization Group, Department of Science and Technology (ITN),\\
Linköping University, Sweden\\
\texttt{elena.farahbakhsh.touli@liu.se}\\
\texttt{ingrid.hotz@liu.se}\\
\texttt{talha.bin.masood@liu.se}}

\authorrunning{E. F. Touli and I. Hotz and T. B. Masood}
\ArticleNo{23}

\begin{document}

\maketitle

\begin{abstract}
  The interleaving distance is a key tool for comparing merge trees, which provide topological summaries of scalar functions. In this work, we define an average merge tree for a pair of merge trees using the interleaving distance. Since such an average is not unique, we propose a method to construct a representative average merge tree. We further prove that the resulting merge tree indeed satisfies a natural notion of averaging for the two given merge trees. To demonstrate the structure of the average merge tree, we include illustrative examples.
\end{abstract}

\section{Introduction}

Tree-like structures are fundamental in many areas, including topological data analysis, shape comparison, hierarchical clustering, and phylogenetics. A merge tree is one such structure that encodes how the connected components of a space evolve as we sweep through the values of some scalar function defined on that space. Merge trees have applications in visualization, chemistry, physics, and beyond \cite{MSDiSFbCC, ASAoLMTfUV, BToDL, EEDEUMTM, Thygesen2023, biasotti2008describing, SiSFT}.

Several distances have been proposed to compare trees. One category of methods relate to the edit distance and the alignment distance, both of which have been shown to be SNP-hard (Strictly NP-hard) to compute \cite{AoTAAtTE, SMSRCULT}. Another method that has been given much attention recently is the \emph{interleaving distance}. %one of the most widely used measures for comparing merge trees. 
This distance has been the subject of extensive research in a number of subsequent works \cite{FAfCGHaIDbT, CCotID, CtIDiNPH, DMTfPT, ToIoCwaF, CtGHDfMT} \revision{even though it is NP-hard to compute the interleaving distance between merge trees and in \cite{FAfCGHaIDbT} the authors proposed two fixed-parameter tractable (FPT) algorithms to compute the interleaving distance between merge trees. Interleaving distance} was first introduced \revision{by the authors in \cite{PoPMaTD}} in the context of persistence modules and later extended to the \emph{labeled interleaving distance} to quantify the similarity of labeled merge trees \cite{IIDfMT, TeiCMfPTaaID, EHAfIDbMT, ASAoLMTfUV, yan2022geometry}. 

In this paper, we aim to statistically analyze merge trees. 
%A key aspect of this analysis is computing an average, or mean tree, from a given pair of merge trees using the interleaving distance. 
The central question we address is: given two merge trees, what is the average merge tree between them? Building on the conjecture of \cite{LIDfRG}, an average merge tree with respect to the interleaving distance is known to exist but is generally not unique. Consequently, we propose a method to construct a representative merge tree that aggregates the structure of the two given merge trees. 
%To be more specific, given two merge trees we find a third merge tree such that if the interleaving distance between the first and the second is $\ve$, then the interleaving distance between the first and the third is $\tfrac{\ve}{2}$ and between the second and the third is also $\tfrac{\ve}{2}$. %This third tree serves as an average of the first two. 

%\subsection{Related Work}
%The problem of defining or computing a representative tree from a collection of input trees has been investigated in several studies \cite{GoSoTS, MiSoTS, IIDfMT, ASAoLMTfUV}.  Later, \revision{the authors in \cite{IIDfMT, ASAoLMTfUV}} proposed a method for computing a 1-center tree from a collection of labeled merge trees. \revision{In labeled merge trees, corresponding labels in the two trees must be mapped to each other. However, in this study, we focus on computing the average tree for unlabeled merge trees.} 

The problem of defining or computing a representative tree from a collection of input trees has been investigated \revision{in \cite{MiSoTS} on the space of tree like shapes \cite{GoSoTS}.}  Later, \revision{the authors in \cite{IIDfMT, ASAoLMTfUV}} proposed a method for computing a 1-center tree from a collection of labeled merge trees. \revision{In labeled merge trees, corresponding labels in the two trees must be matched to each other, and computing the interleaving distance between two labeled merge trees with identical labels can be done in polynomial time. However, in this study, we focus on computing the average tree for unlabeled merge trees using the interleaving distance, a problem that has not been previously studied.} 

%\subsection{Contributions}

In Section \ref{sec:fact}, we present the necessary definitions and background. Section \ref{sec:averagetree} introduces our method for computing the average merge tree of two given merge trees. Section \ref{sec:example} provides illustrative examples. Finally, Section \ref{S:conclusion} concludes the paper with a summary and discussion.

\section{Theoretical Foundations}\label{sec:fact}

This section is dedicated to reviewing the main background material and introducing the fundamental definitions that will be used throughout this paper.

\begin{comment}
\subsection{Metric space, metric trees}\label{subsec:metricspace}

For a given set $X$ and a distance $d_\mathcal{X}$ the $\mathcal{X} = (X,d_\mathcal{X})$ is called a \emph{metric space} if $X$ is a set of points and $d_\mathcal{X}$ is a real value function $d_\mathcal{X}: X\times X \rightarrow \mathbb{R}$ such that for any $x,y,z \in X$ we have the following conditions:
\begin{align*}
\text{(P$1$)} \quad &d_\mathcal{X}(x,y)\geq 0\quad \text{(Non-negativity)}\\
\text{(P$2$)} \quad &d_\mathcal{X}(x,y) = d_\mathcal{X}(y,x)\quad  \text{(Symmetry)}\\
\text{(P$3$)} \quad &d_\mathcal{X}(x,y) + d_\mathcal{X}(y,z) \geq d_\mathcal{X}(x,z)\quad \text{(Triangle inequality)} \\
\text{(P$4$)} \quad &d_\mathcal{X}(x,y) = 0 \Rightarrow x = y  \quad \text{(Identity)}
\end{align*}
If all conditions are satisfied except the identity, then $d_\mathcal{X}$ is called a \emph{pseudometric} and $\mathcal{X}$ is referred to as a \emph{pseudometric space}.

A metric space $X$ is a \textit{finite metric tree} if 
\begin{enumerate}
    \item $X$ is a length metric space, and 
    \item $X$ is homeomorphic to the geometric realization $|X|$ of a finite tree $X = (V,E)$.
\end{enumerate}
Thus, a metric tree $T$ is represented by $(|T|, d_T)$, where $|T|$ is the space of the tree \cite{FAfCGHaIDbT}.

\end{comment}

\subsection{Merge trees}

Consider a {\revision{connected topological space }}$\mathbb{M}$ and a continuous scalar function 
$h: \mathbb{M} \rightarrow \mathbb{R}$, which can be viewed as a height function. 
For each real number $a \in \mathbb{R}$, the sublevel set of $h$ is defined as
$
\mathbb{M}_a = \{\, x \in \mathbb{M} \mid h(x) \leq a \,\}.$
As $a$ increases, the corresponding sublevel sets expand. At $a = -\infty$, 
the sublevel set is empty. As the value of $a$ grows, two types of events can occur. New connected components appear (at local minima), and existing components merge (at saddle points). 
%\begin{enumerate}
%    \item New connected components appear (at local minima).
%    \item Existing components merge (at saddle points).
%\end{enumerate}
Eventually, when $a = +\infty$, the entire space $\mathbb{M}$ is included. 
The merge tree $T^h$ captures this process, describing the topological evolution of 
sublevel sets under a continuous sweep of the function value from $-\infty$ to $+\infty$. 
The result of this paper can be applied to the more general setting of merge trees, using the following definition.

\begin{definition}[\cite{FAfCGHaIDbT}]
    A merge tree is a rooted tree $T^h$ with a continuous function $h$ assigned to $|T|$\revision{, where $|T|$ is the underlying space of $T$}. The function is strictly decreasing from the root to the leaves.
\end{definition}

\subsection{Interleaving Distance}
The interleaving distance for merge trees is defined as follows: 

\begin{definition} [\cite{IDbMT}]
For two given merge trees $T_1^f$ and $T_2^g$ we say that a pair of compatible maps
$\alpha: |T_1^f| \rightarrow |T_2^g|$ and $\beta: |T_2^g|\rightarrow |T_1^f|$
are $\varepsilon$-\revision{interleaved} if the following four conditions for each
$u \in |T_1^f|$ and $w\in |T_2^g|$ are satisfied:
\begin{equation*}
\begin{array}{@{}l@{\qquad}l@{}}
\begin{aligned}
\text{(C1)}\quad& g(\alpha(u)) = f(u) + \varepsilon\\
\text{(C2)}\quad& \beta(\alpha(u)) = u^{2\varepsilon}
\end{aligned}
&
\begin{aligned}
\text{(C3)}\quad& f(\beta(w)) = g(w) + \varepsilon\\
\text{(C4)}\quad& \alpha(\beta(w)) = w^{2\varepsilon}
\end{aligned}
\end{array}
\end{equation*}
\end{definition}
\revision{where $u^{2\ve}$ ($w^{2\ve}$) is the unique ancestor of $u$ ($w$) with the function value $f(u) + 2\ve$ ($g(w)+ 2\ve$).}

Conditions (C$2$) and (C$4$) imply that if two points $u_1, u_2\in |T_1^f|$ are mapped to a single point $w\in |T_2^g|$, the function value of their nearest common ancestor (NCA) should not exceed $f(u_1) + 2\ve$. In other words, this means that their NCA should not be too far from either of the points. For more information, see Figure~\ref{fig:interleaving_distance}.

\begin{figure}[tbph]
\centering
\includegraphics[width=14cm]{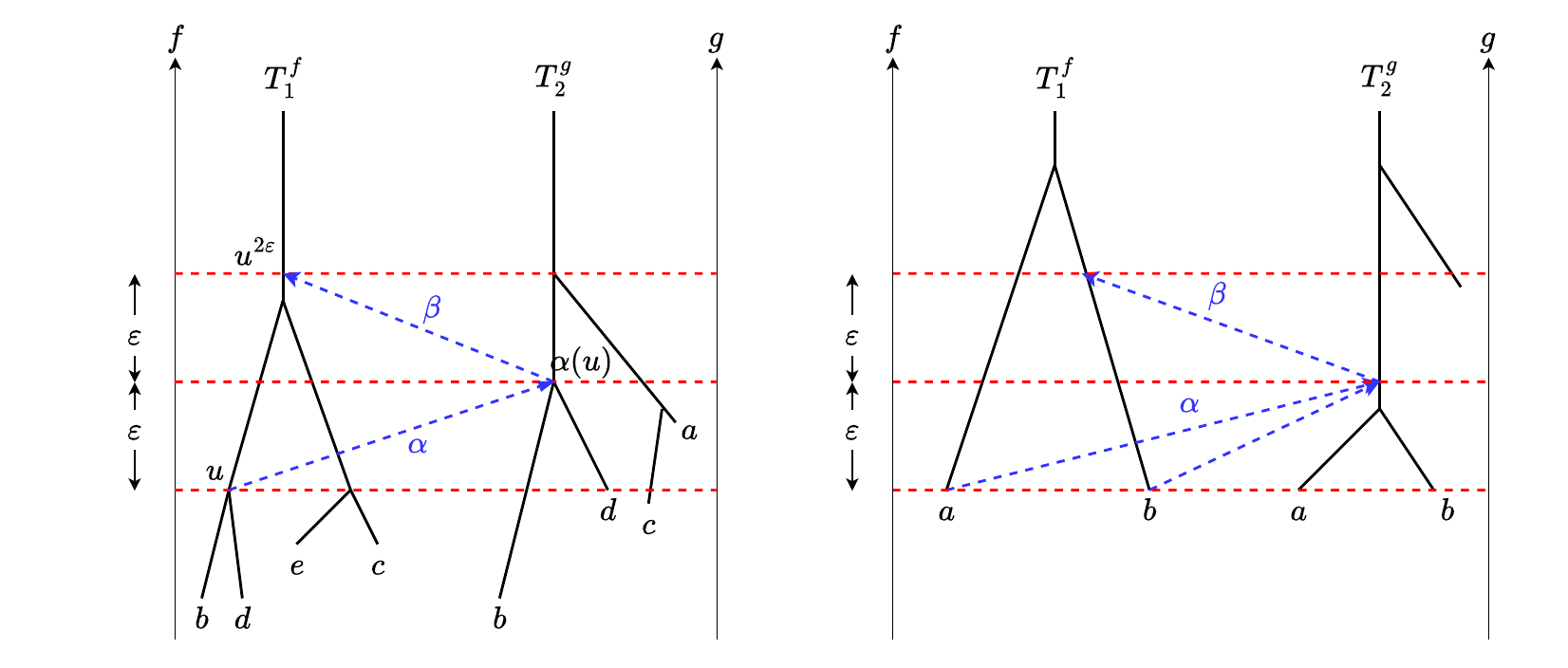}
\caption{Left — A comparison of two mappings $\alpha$ and $\beta$, where vertex $u$ is mapped to $\alpha(u)$, located $\ve$ above $u$, and $\beta(\alpha(u))$ is an ancestor of $u$, situated $2\ve$ higher; Right — An example where the maps $\alpha$ and $\beta$ are not $\ve$-interleaved, since $\beta(\alpha(a)) \neq a^{2\ve}$.}
\label{fig:interleaving_distance}
\end{figure}

\begin{comment}
\begin{figure}[tbph]
\centering
\includegraphics[width=14cm]{./Fig/interleaving_distance}
\caption{Left — A comparison of two mappings $\alpha$ and $\beta$, where vertex $u$ is mapped to $\alpha(u)$, located $\ve$ above $u$, and $\beta(\alpha(u))$ is an ancestor of $u$, situated $2\ve$ higher; Right — An example where the maps $\alpha$ and $\beta$ are not $\ve$-interleaved, since $\beta(\alpha(a)) \neq a^{2\ve}$.}
\label{fig:interleaving_distance}
\end{figure}
\end{comment}

The interleaving distance between two merge trees $T_1^f$ and $T_2^g$ is the infimum  of all $\ve\geq 0$ such that there are two \revision{$\ve$-interleaved} maps  $\alpha:|T_1^f| \rightarrow |T_2^g|$ and $\beta:|T_2^g| \rightarrow |T_1^f|$.

\begin{comment}
\begin{lemma}[\cite{IDbMT}]
The interleaving distance is a {pseudometric} on the space of merge trees, meaning it satisfies the three properties of non-negativity (P$1$), symmetry (P$2$), and triangle inequality (P$3$) of the metric space.

\begin{align*}
&\text{(P$1$)} \quad d_{ID}(T_1^f, T_2^g)\geq 0 %~\text{and}~ d_{ID}(T_1^f , T_2^g)=0, ~\text{if and only if} ~ T_1^f = T_2^g. 
\\
&\text{(P$2$)} \quad d_{ID}(T_1^f, T_2^g) = d_{ID}(T_2^g, T_1^f) \\
&\text{(P$3$)} \quad d_{ID}(T_1^f, T_2^g) \leq d_{ID}(T_1^f, T_3^k) + d_{ID}(T_3^k, T_2^g).
\end{align*}

\end{lemma}
\end{comment}

\begin{comment}
\begin{observation}[\cite{FAfCGHaIDbT}]\label{Obs_from_prev}
If a point lies in $T_2^g \setminus \text{Img}(\alpha)$,  then none of its descendants are contained in $\text{Img}(\alpha)$. 
\end{observation}

Observation \ref{Obs_from_prev} implies that the set $T_2^g \setminus \text{Img}(\alpha)$ is a union of rooted subtrees of $T_2^g$.
%(see Figure \ref{fig:mappedT_1T_2}).
\end{comment}

Later Touli and Wang \cite{FAfCGHaIDbT} showed that the interleaving distance can be calculated using a single map (\emph{$\ve$-good map}) instead of two compatible maps. 
%The definition of $\ve$-good map is as follows: 

\begin{definition} 
[\cite{FAfCGHaIDbT}]
\label{def:goodmap}

\revision{We write $u_1 \preceq u_2$ to mean that $u_2$ is either equal to $u_1$ or an ancestor of $u_1$. Given a continuous map $\alpha' : |T_1^f| \to |T_2^g|$ and a vertex $w \in |T_2^g|$, for a point $w$ which is not in the Img($\alpha'$) we denote by $w^A$ the nearest ancestor of $w$ that lies in $\mathrm{Img}(\alpha')$.} A continuous map $\alpha' : |T_1^f| \rightarrow |T_2^g|$ is called an \emph{$\varepsilon$-good map} if it satisfies the following three conditions:
\begin{equation*}
\begin{aligned}
\text{(S1)}\quad& g(\alpha'(u)) = f(u) + \varepsilon, \\[6pt]
\text{(S2)}\quad& \text{If } \alpha'(u_1) \preceq \alpha'(u_2), 
\text{ then } u_1^{2\varepsilon} \preceq u_2^{2\varepsilon}, \\[6pt]
\text{(S3)}\quad& \text{If } w \in |T_2^g| \setminus \mathrm{Img}(\alpha'), 
\text{ then } g(w^A) - g(w) \leq 2\varepsilon.
\end{aligned}
\end{equation*}
\end{definition}

\begin{theorem} [\cite{FAfCGHaIDbT}]
For given pair of merge trees $T_1^f$ and $T_2^g$, there exists an $\ve$-good map $\alpha':|T_1^f| \rightarrow |T_2^g|$, if and only if there exists a pair of $\ve$-\revision{interleaved} maps $\alpha:|T_1^f| \rightarrow |T_2^g|$ and $\beta:|T_2^g| \rightarrow |T_1^f|$. 
\end{theorem}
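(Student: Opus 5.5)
The proof has two directions: the forward one is a direct verification, and the reverse one requires constructing $\beta$ from the single map $\alpha'$.

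\emph{Forward direction.} Suppose $\alpha,\beta$ are $\ve$-interleaved, and set $\alpha'=\alpha$. Condition (S1) is exactly (C1).

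For (S2), assume $\alpha(u_1)\preceq\alpha(u_2)$. Since $g$ increases toward the root, $f(u_1)\le f(u_2)$. Ancestor relations are preserved by any continuous map that raises function values by a constant, because such a map sends the monotone path from a point to its ancestor onto the monotone path between the images. Applying this to $\beta$ gives $\beta(\alpha(u_1))\preceq\beta(\alpha(u_2))$. By (C2) this reads $u_1^{2\ve}\preceq u_2^{2\ve}$.

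For (S3), take $w\notin \mathrm{Img}(\alpha)$. By (C4), $w^{2\ve}=\alpha(\beta(w))$ lies in the image, and it is an ancestor of $w$. Hence the nearest ancestor $w^A$ in the image satisfies $g(w^A)\le g(w)+2\ve$.

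\emph{Reverse direction.} Given an $\ve$-good map $\alpha'$, set $\alpha=\alpha'$, so (C1) holds. The task is to build $\beta:|T_2^g|\to|T_1^f|$. The plan proceeds in four steps.

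First, define $\beta$ on points of the image. For $w=\alpha'(u)$, set $\beta(w)=u^{2\ve}$. This is well defined: if $\alpha'(u_1)=\alpha'(u_2)$, applying (S2) in both orders gives $u_1^{2\ve}\preceq u_2^{2\ve}$ and $u_2^{2\ve}\preceq u_1^{2\ve}$, so the two points are equal. Also $f(\beta(w))=f(u)+2\ve=g(w)+\ve$, which is (C3) on the image.

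Second, extend $\beta$ to points off the image. For $w\notin\mathrm{Img}(\alpha')$, (S3) gives $g(w)+\ve\ge g(w^A)-\ve$. Choose the point $v$ on the root path starting at $\beta(w^A)$ with $f(v)=f(\beta(w^A))-(g(w^A)-g(w))$; this requires descending, not ascending, from $\beta(w^A)$. Descending is possible because $\beta(w^A)=u^{2\ve}$ for some preimage $u$ of $w^A$, and $f(u)=g(w^A)-\ve\le g(w)+\ve$. So I set $\beta(w)$ to be the ancestor of $u$ at height $g(w)+\ve$, which lies on the segment from $u$ to $u^{2\ve}$. The choice of $u$ is irrelevant: any two preimages of $w^A$ have ancestors that agree at height $f(u)+2\ve$, and this definition agrees with the first step in the limit as $w$ approaches $w^A$. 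Continuity on each complementary subtree then follows, since that subtree hangs below $w^A$ and is mapped monotonically along a single path.

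Third, verify (C2). $\beta(\alpha'(u))=u^{2\ve}$ holds by the first step.

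Fourth, verify (C4), which I expect to be the main obstacle.

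For $w=\alpha'(u)$ in the image, we need $\alpha'(u^{2\ve})=w^{2\ve}$. This holds because $\alpha'$ carries the path from $u$ to $u^{2\ve}$ onto an ascending path from $w$ of height gain $2\ve$, and by uniqueness of ancestors that path ends at $w^{2\ve}$.

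For $w$ off the image, $\beta(w)$ is an ancestor of $u$, so $\alpha'(\beta(w))$ is the ancestor of $w^A$ at height $g(w)+2\ve$. Since $w^A$ is an ancestor of $w$ and $g(w^A)\le g(w)+2\ve$, this point is exactly $w^{2\ve}$.

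The delicate points are making continuity rigorous at the boundaries of the complementary subtrees, and making the well-definedness argument of the first step hold under continuity. For the latter, note that (S2) is only a statement about comparable images, which is all the argument above uses.
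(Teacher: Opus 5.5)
The paper does not prove this statement; it only cites it from \cite{FAfCGHaIDbT}. Your strategy is the standard construction: keep $\alpha=\alpha'$, set $\beta(w)=u^{2\ve}$ on $\mathrm{Img}(\alpha')$, and send each non-image subtree down a segment ending at $\beta(w^A)$. Your forward direction is correct, and so are your checks of (C1)--(C4) once $\beta$ is properly defined.

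\textbf{The false step: the choice of $u$ matters.} You claim the choice of $u\in\alpha'^{-1}(w^A)$ is irrelevant. But $u_1^{2\ve}=u_2^{2\ve}$ says nothing about the ancestors of $u_1,u_2$ at the lower height $g(w)+\ve<f(u)+2\ve$. Here is a counterexample.
\begin{itemize}
\item Let $T_1^f$ have two leaves $a,b$ at height $0$ merging at height $2\ve$.
\item Let $T_2^g$ be a single unbranched root path starting at a leaf $y$ with $g(y)=-\ve$.
\item Let $\alpha'(p)$ be the point of height $f(p)+\ve$.
\end{itemize}
This map is $\ve$-good. Every $w$ with $g(w)\in[-\ve,\ve)$ has the same $w^A$ (the point at height $\ve$), and $\alpha'^{-1}(w^A)=\{a,b\}$. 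Yet the ancestor of $a$ at height $g(w)+\ve$ lies on the edge above $a$, while that of $b$ lies on the edge above $b$. So $\beta$ is only well defined after you fix one preimage $u_C$ for each component $C$ of $|T_2^g|\setminus\mathrm{Img}(\alpha')$. If the choice varied inside $C$, $\beta$ would jump between branches and fail to be continuous. With $u_C$ fixed, your single-path argument does give continuity on $C$. Your remark about height $f(u)+2\ve$ only yields continuity at $w^A$ itself, which holds for every choice.

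\textbf{The omission: continuity at points of the image.} You flag continuity of $\beta$ at points of $\mathrm{Img}(\alpha')$ as delicate but never argue it. It follows from (S2), because every $w'$ close enough to $w\in\mathrm{Img}(\alpha')$ is comparable to $w$.
\begin{itemize}
\item If $w'\in\mathrm{Img}(\alpha')$, applying (S2) in the appropriate order gives $\beta(w')\preceq\beta(w)$ or $\beta(w)\preceq\beta(w')$. So the two images are comparable, with height difference $|g(w)-g(w')|$.
\item If $w'\notin\mathrm{Img}(\alpha')$, then $w'\preceq w$. Since the image is closed and upward closed, $w'^A=w$ once $w'$ is close enough. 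Hence $\beta(w')$ lies on the segment from $u_C$ to $\beta(w)=u_C^{2\ve}$, at height $g(w')+\ve$.
\end{itemize}
In both cases $\beta(w')\to\beta(w)$ as $w'\to w$. Together with continuity inside each component, $\beta$ is continuous, and the rest of your argument then goes through unchanged.
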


In this paper, we show that the definition of interleaving distance can be simplified further by modifying the second condition in \revision{Definition} \ref{def:goodmap} to the following condition.
\[
\text{(S$2'$)} \quad {\revision{\text{If }}} ~ \revision{\alpha'(u_1) = \alpha'(u_2)}, \quad {\text{then }} u_1^{2\varepsilon} = u_2^{2\varepsilon}
\]

\begin{theorem}\label{Theo:improve}
The second condition of an $\ve$-good map can be replaced with condition (S2').
%\[
%\text{(S$2'$)} \quad {\text{if }} ~ \revision{\alpha'(u_1) = \alpha'(u_2)}, \quad {\text{then }} u_1^{2\varepsilon} = u_2^{2\varepsilon}
%\]
\end{theorem}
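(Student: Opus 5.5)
We have to show that a continuous map $\alpha'$ satisfying (S1), (S2) and (S3) is the same thing as a continuous map satisfying (S1), (S2') and (S3). Then the theorem of \cite{FAfCGHaIDbT} transfers verbatim to the new definition.

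One direction is immediate. If $\alpha'(u_1)=\alpha'(u_2)$, then $\alpha'(u_1)\preceq\alpha'(u_2)$ and $\alpha'(u_2)\preceq\alpha'(u_1)$. Applying (S2) twice gives $u_1^{2\ve}\preceq u_2^{2\ve}$ and $u_2^{2\ve}\preceq u_1^{2\ve}$. Since $\preceq$ is antisymmetric on a rooted tree, $u_1^{2\ve}=u_2^{2\ve}$, which is (S2').

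For the converse, assume (S1), (S2') and (S3), and let $\alpha'(u_1)\preceq\alpha'(u_2)$. By (S1), $f(u_2)=g(\alpha'(u_2))-\ve\ge g(\alpha'(u_1))-\ve=f(u_1)$. Let $u_1'$ be the ancestor of $u_1$ with $f(u_1')=f(u_2)$, and let $\gamma$ be the monotone path in $|T_1^f|$ from $u_1$ up to $u_1'$.

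The key step is to show that $\alpha'(\gamma)$ is the monotone path from $\alpha'(u_1)$ up to its ancestor at height $f(u_2)+\ve$. The image $\alpha'(\gamma)$ is connected, and by (S1) $g\circ\alpha'\circ\gamma$ is strictly increasing. A path in a tree along which the height strictly increases cannot turn downward, so it must be the ascending path from $\alpha'(u_1)$. Because $\alpha'(u_1)\preceq\alpha'(u_2)$ and the heights agree, the endpoint of this path is $\alpha'(u_2)$. Hence $\alpha'(u_1')=\alpha'(u_2)$. Applying (S2') gives $(u_1')^{2\ve}=u_2^{2\ve}$. Since $u_1'$ is an ancestor of $u_1$, we have $u_1^{2\ve}\preceq (u_1')^{2\ve}=u_2^{2\ve}$, which is (S2).

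The only delicate point is the claim that a continuous map which increases height at a constant offset sends ascending paths to ascending paths. I would prove it by contradiction. Suppose $\alpha'\circ\gamma$ leaves the ancestor ray of $\alpha'(u_1)$. Then it must branch downward at some point, so its height would have to decrease there, contradicting the strict monotonicity forced by (S1). Condition (S3) is unaffected throughout, so combining both directions with the theorem of \cite{FAfCGHaIDbT} completes the proof.
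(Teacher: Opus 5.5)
Your proof is correct and follows essentially the same route as the paper: both take the ancestor $u_1'$ of $u_1$ at height $f(u_2)$, use continuity together with (S1) to conclude $\alpha'(u_1')=\alpha'(u_2)$, and then invoke (S2'). The paper argues by contradiction and attributes that key step to continuity alone, whereas you argue directly and also justify why a continuous map that shifts heights by a constant preserves ancestry.
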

\begin{proof}
It is straightforward to verify that (S2) implies (S2'). We now prove this by contradiction. Suppose there exists a continuous map $\alpha':|T_1^f| \rightarrow |T_2^g|$ satisfying the conditions (S1), (S2'), and (S3) but not (S2). Assume for contradiction that there exist $u_1,u_2 \in T_1^f$ such that $\alpha'(u_1)\prec\alpha'(u_2)$, and $u_1^{2\ve} \npreceq u_2^{2\ve}$, that is, the condition S2 is not satisfied. Then by the first condition (S1), $f(u_1)<f(u_2)$. This situation can only occur if $u_1^{2\ve} |_< u_2^{2\ve}$ 
\footnote{The notation \( x \, |_{<} \, y \) (resp.\ \( x \, |_{=} \, y \)) indicates that \( x \) and \( y \) lie on different branches and that \( f(x) < f(y) \) (resp.\ \( f(x) = f(y) \)).}.
%\footnote{$u_1^{2\ve} |_< u_2^{2\ve}$ implies that $u_1^{2\ve}$ and $u_2^{2\ve}$ lie on different branches and that $f(u_1^{2\ve})<f(u_2^{2\ve})$.}.

Since the map $\alpha'$ is continuous, there must exist a point ($u'_1$) between $u_1$ and \revision{$NCA(u_1, u_2)$} such that ${u'}_1 |_{=} u_2$ and ${u'}_1^{2\ve}\neq u_2^{2\ve}$. Furthermore, ${u'}_1$ and $u_2$ are mapped to the same point in $T_2^g$, which contradicts the second condition of (S2') (see Figure \ref{fig:combined}). 
\begin{figure}[htbp]
  \centering

  {
    \includegraphics[width=0.60\textwidth, trim=70 450 100 150, clip]{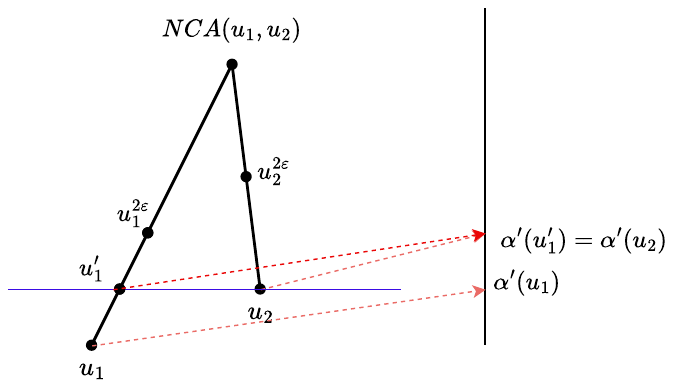}
  }

  \caption{Illustrations of interleaving distance differences, ${u'}_1$ and $u_2$ map to the same point, but their $2\ve$-extensions differ: ${u'}_1^{2\ve} \neq u_2^{2\ve}$.}
  \label{fig:combined}
\end{figure}
\end{proof}

%\begin{proof}
%Proof in the Appendix \ref{App:Theo:improve}.
%\end{proof}
\begin{comment}
\begin{proof}
By way of contradiction, suppose there exists a continuous map $\alpha':|T_1^f| \rightarrow |T_2^g|$ satisfying conditions (S1), (S2'), and (S3). Assume there exist $u_1,u_2 \in T_1^f$ such that $\alpha(u_1)\prec\alpha(u_2)$, and $u_1^{2\ve} \npreceq u_2^{2\ve}$, then by the first condition, $f(u_1)<f(u_2)$. This situation can only occur if $u_1^{2\ve} |_< u_2^2\ve$, which implies that $u_1^{2\ve}$ and $u_2^{2\ve}$ lie on different branches and that $f(u_1^{2\ve})<f(u_2^{2\ve})$. 

Since the map $\alpha$ is continuous, there must exist a point ($u'_1$) between $u_1$ and $u_1^{2\ve}$  such that ${u'}_1 |_{=} u_2$ and ${u'}_1^{2\ve}\neq u_2^{2\ve}$. Furthermore, ${u'}_1$ and $u_2$ are mapped to the same point in $T_2^g$, which contradicts the second condition of (S2') (see Figure \ref{fig:simpleinterleaving_distance}). 
\end{proof}
\end{comment}

To compute the $\ve$-good map $\alpha: |\Tone| \rightarrow |\Ttwo|$, Touli and Wang \cite{FAfCGHaIDbT} first construct the \emph{augmented trees} $\Toneh$ and $\Ttwoh$ %(see Appendix \ref{app:egoodmap}) 
and then run a dynamic programming method. Two such methods are presented in \cite{FAfCGHaIDbT}.

\section{Computing the Average Tree between Two Merge Trees}\label{sec:averagetree}
In this section, given a pair of merge trees $T_1^f$ and $T_2^g$ with $d_{ID}(T_1^f, T_2^g)\leq \ve$, we aim to find a third merge tree $T_3^k$ such that
\[d_{ID}(T_1^f, T_3^k)\leq \frac{\ve}{2} \quad \text{and}  \quad
d_{ID}(T_2^g, T_3^k)\leq \frac{\ve}{2}.\]

The intuition behind the algorithm for finding the average tree using the interleaving distance is as follows (see Figure \ref{fig:allsteps} for illustration):  
\begin{figure}[tbph]
\begin{center}
\centering
\includegraphics[width=14.75cm, trim=22 0 0 0, clip]{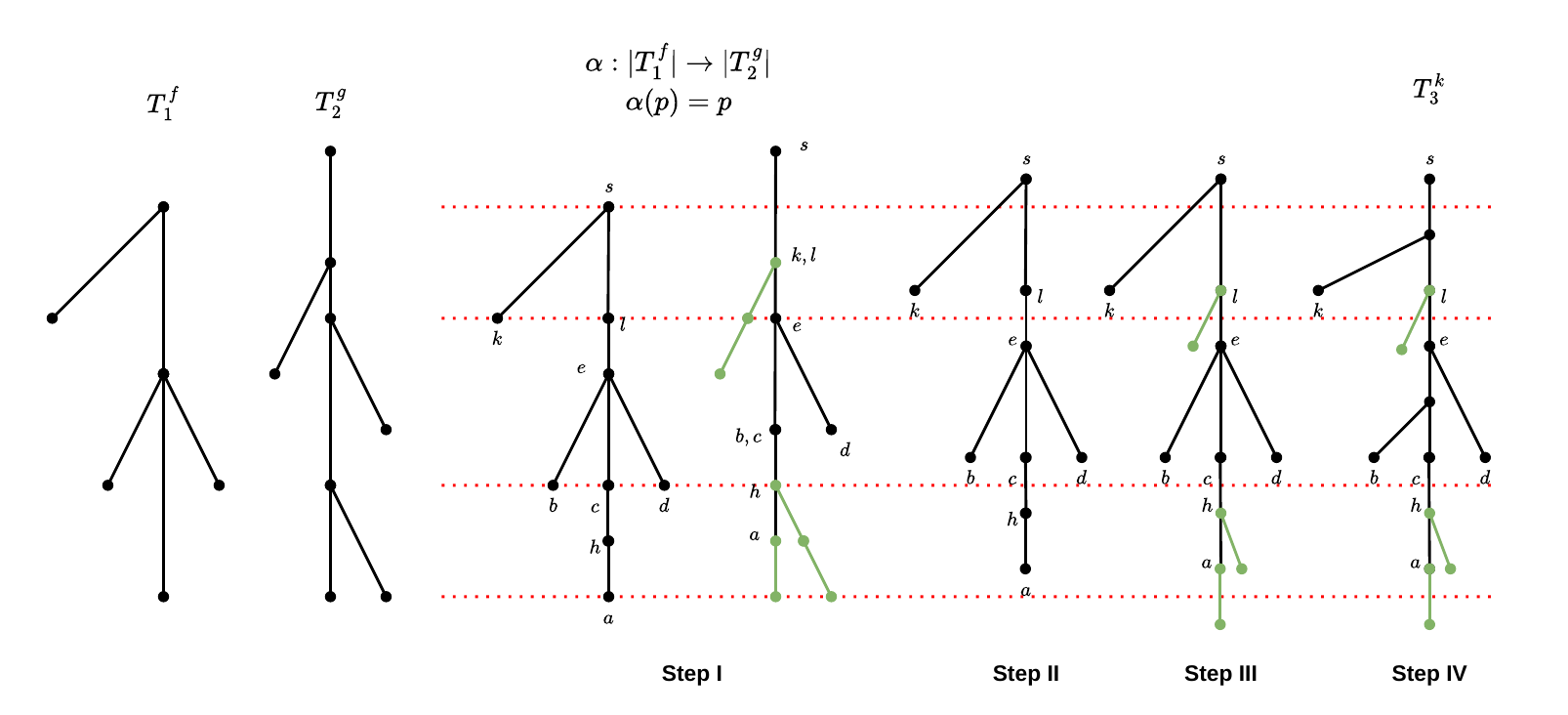}
\caption{Two merge trees $\Tone$ and $\Ttwo$ are displayed on the left. The process consists of the following steps: \textbf{(Step I)} construct the augmented trees $\Toneh$ and $\Ttwoh$ and identify the $\ve$-good map $\alpha$. \textbf{(Step II)} shift $\Toneh$ upward by $\frac{\ve}{2}$. \textbf{(Step III)}  incorporate edges from  $\Ttwoh\setminus Img(\alpha)$; and \textbf{(Step IV)}  for nodes that are mapped to single point in $\Ttwoh$  shift their nearest common ancestor downward by $\ve$.}
\label{fig:allsteps}
\end{center}
\end{figure} 

\noindent\textbf{Step I.} We construct the augmented trees $\Toneh$ and $\Ttwoh$ as illustrated in \cite{FAfCGHaIDbT}, and find the $\ve$-good map $\alpha$.  

\noindent\textbf{Step II.} We shift the tree $\Toneh$ upward by $\tfrac{\ve}{2}$, constructing a new tree $T_3^k$ where  
\[
k(p) = f(p) + \tfrac{\ve}{2}.
\]  
This guarantees that the distance between $\Toneh$ and $T_3^k$ is equal to $\tfrac{\ve}{2}$.  
Next, we modify and add some edges to $T_3^k$ so that its distance to $\Ttwoh$ is also at most $\tfrac{\ve}{2}$, while preserving the distance between $\Toneh$ and $T_3^k$ as exactly $\tfrac{\ve}{2}$.  

\noindent\textbf{Step III.} We add edges to $T_3^k$ for the points that are outside the image of the $\ve$-good map $\alpha:|\Toneh|\rightarrow|\Ttwoh|$.  
This step is necessary because their distance to the nearest ancestor (which is in the image of $\alpha$) can exceed $\ve$, creating difficulties in constructing the map $\beta': |T_3^k|\rightarrow|\Ttwoh|$.  
In particular, for all points outside the image of $\beta'$, their distance to the nearest ancestor $w_A$ (which is in the image) must be at most $\ve$.  

To handle this, we first trim subtrees outside the image of $\alpha$, ensuring that their distance to their root (the nearest vertex (ancestor) in the image) is less than $\ve$. Then, we attach the root of each trimmed subtree to the first vertex in $\alpha^{-1}(w_A)$, chosen according to lexicographical order.  

\noindent\textbf{Step IV.} We focus on vertices that are mapped to the same vertex by the map $\alpha$.  
According to the $\ve$-good map (in its simpler form), for any pair of points $u_1$ and $u_2$ in $T_3^k$ that are mapped to a single vertex in $\Ttwoh$, their distance to the NCA is at most $2\ve$.  
However, we require this distance to be at most $\ve$.  
To achieve this, we adjust the NCA, moving it to a position that is $\ve$ higher than the nodes $u_1$ and $u_2$.  (For further illustration, see Figure~\ref{fig:allsteps}.)

We now present the algorithm as follows. During the algorithm, we construct a map 
\begin{equation}\label{eq:gamma}\gamma: |T_1^f| \cup |{{T}'}_{A,2}^g| \rightarrow |T_3^k|\end{equation}
where the map is surjective and  $|{{T}'}_{A,2}^g|\subset |T_2^g|$ denotes the union of subtrees in $|T_2^g|\setminus \text{Img}(\alpha)$ rooted at the points in the set $A$ and trimmed such that every point in the subtree lies within a distance of at most 
$\ve$ from its root in $A$.

\begin{algorithm}[H]
\caption{Finding Mean Tree by Using Interleaving Distance}
\label{alg:MeanTree}
\begin{algorithmic}[1]

\Statex \textbf{Step I:}
Construct the augmented trees $\Toneh$ and $\Ttwoh$, and determine the continuous $\varepsilon$-good map $\alpha: |\Toneh| \rightarrow |\Ttwoh|$. Store this map. To store the map, it is sufficient to record the pairs $(S, w)$ along the downward paths (from the roots to the first-level vertices), using \emph{Slower FPT-Algorithm} or \emph{the Faster FPT-Algorithm}, where $F(S, w) = 1$ from \cite{FAfCGHaIDbT}.

\Statex \textbf{Step II:}
Construct a tree $T_3^k$ such that $T_3 = \hat{T}_1$ 
(i.e.\ $V(T_3)=V(\hat{T}_1)$ and $E(T_3)=E(\hat{T}_1)$).  
For each $p \in |T_3^k|$, define $
k(p) = f(p) + \tfrac{\varepsilon}{2}.$
\Statex \quad (Constructing $\gamma$) For any $p \in |\Toneh|$, set $\gamma(p) = p$.

\Statex \textbf{Step III:}
If there exists a rooted subtree $|\hat{T}_{2,{w^A}}^g|$ of $|\Ttwoh|$, rooted at $w^A$,  
such that $w^A$ is the only vertex in this subtree contained in $\operatorname{Img}(\alpha)$:
\begin{enumerate}
    \item Remove all points $q$ in this subtree where $g(q) < g(w^A) - \varepsilon$.
    \item Add the remaining part, $|{\hat{T'}}_{2,w^A}^g|$, to the first point in the lexicographic order of $\alpha^{-1}(w^A)$ (denote it $v_1$).
    \item Replace $v_1$ with $w^A$ in $E(T_3)$ and $V(T_3^k)$.
    \item Insert vertices and edges of $|{\hat{T'}}_{2,w^A}^g|$ into $E(T_3)$ and $V(T_3)$.
\end{enumerate}

\Statex \quad (Constructing $\gamma$)  
For each $p \in {T'}_{2,w^A}^g$, set $k(p) = g(p) - \tfrac{\varepsilon}{2}$.  
For every vertex $w \in |{T'}_{2,w^A}^g|$, set $\gamma(w) = w$.  
Also, set $\gamma(v_1) = w^A$.

\Statex \textbf{Step IV:}
Sort mapped vertices by function values, starting from the smallest.  
For any $S = \{v_1, v_2, \ldots, v_n\} \subseteq |\Toneh|$ such that 
$\alpha(v_i) = w$ for all $v_i \in S$:
\begin{enumerate}
    \item Identify their nearest common ancestor $v_S$.
    \item If $k(v_S) > \varepsilon + k(v_i)$ for some $v_i$,  
    then along the path from $v_i$ to $v_S$, find all $p$ such that 
    $k(p) = \varepsilon + k(v_i)$.  
    Label them ${v'}_1, {v'}_2, \ldots, {v'}_m$.
    \item Preserve the path $\pi(v_S, v'_1)$, merge 
    ${v'}_1, {v'}_2, \ldots, {v'}_m$ into $V'_1$ with the same neighbors as $v'_1$.
    \item Add $V'_1$ to $V(T_3)$ if missing; update $E(T_3)$ accordingly.
    \item If extra vertices appear in $\pi(v_S, {v'}_i)$ for $i > 1$, 
    insert them proportionally along $\pi(v_S, {v'}_1)$.
\end{enumerate}

\Statex \quad (Constructing $\gamma$)  
For any $p \in \pi(v_S, {v'}_i)$, set $\gamma(p) = p'$ where $p' \in \pi(v_S, {v'}_1)$ 
and $k(p) = k(p')$.  
Remove any $p$ with ${v'}_i \prec p \prec v_S$ for $i > 1$, and redefine 
$\gamma(p) = q$ with the same function value.  

%\Statex Finally, update the map $\alpha$ accordingly.

\end{algorithmic}
\end{algorithm}

\begin{observation}\label{obs:kcontinuous}
\revision{
    The graph produced by Algorithm~\ref{alg:MeanTree}, together with the continuous function $k$ defined therein, forms a merge tree. The running time of the algorithm is dominated by its first step. Using the slower algorithm in \cite{FAfCGHaIDbT}, this step runs in
\[
O\!\left(n^3 2^{\mathsf{b}} \mathsf{b}^{\mathsf{b}+1}\right),
\]
where $\mathsf{b}$ is defined in Figure~\ref{fig:eball}. In particular, if $\mathsf{b}$ is bounded, then the algorithm runs in polynomial time.}
\begin{figure}[htbp]
  \centering
  \includegraphics[width=0.15\textwidth]{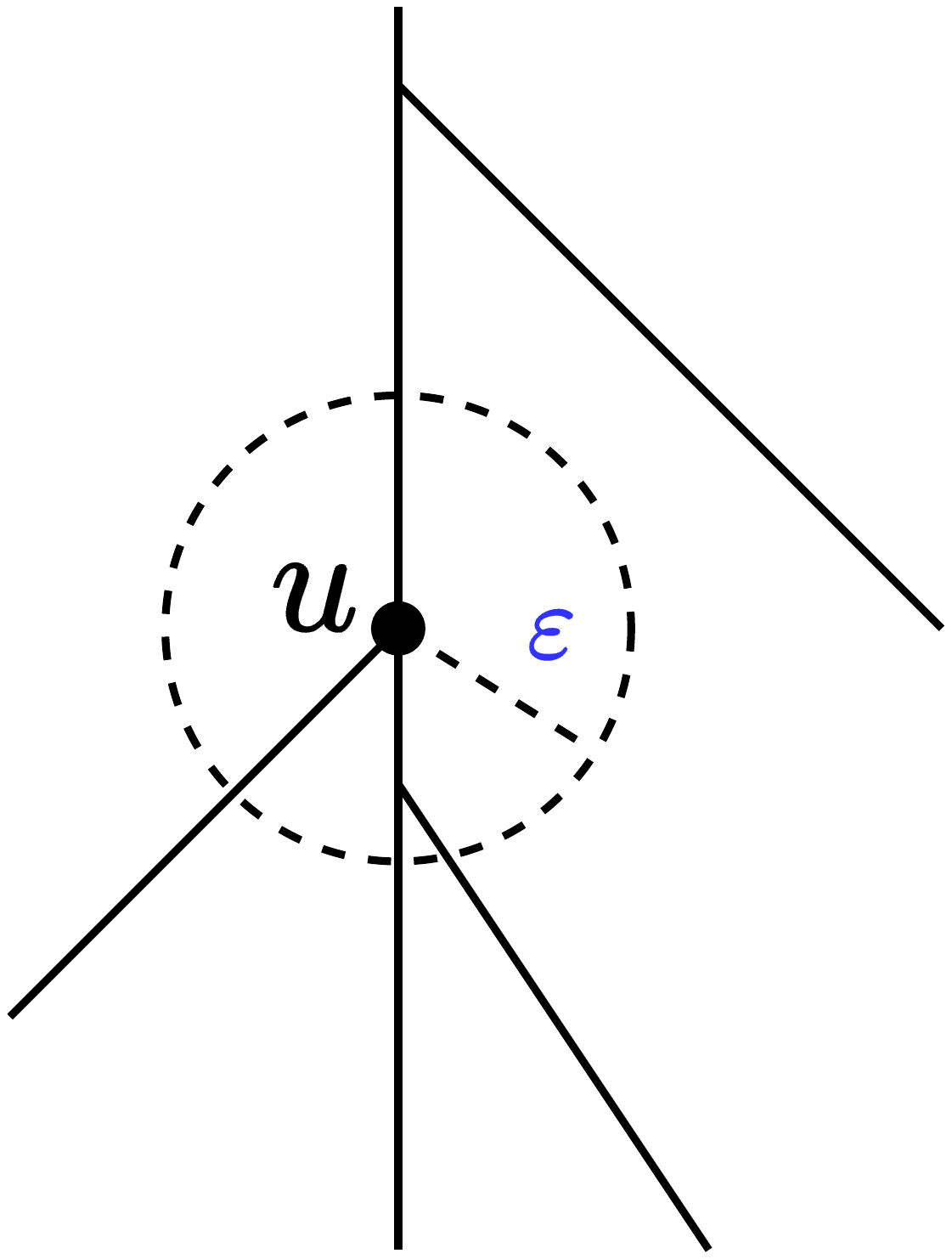}
  \caption{$\ve$-ball around vertex $u$. The $\ve$-degree around $u$ is $10$ which is the summation of all the nodes in the $\ve$-ball.}
  \label{fig:eball}
\end{figure}
\end{observation}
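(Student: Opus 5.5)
The plan is to check two things separately: that the output of Algorithm~\ref{alg:MeanTree} is a merge tree, and that the running time is dominated by Step~I. For the first part I will follow the algorithm step by step and show that each step preserves one invariant: $T_3$ is a finite rooted tree, and $k$ is continuous on $|T_3^k|$ and strictly decreasing from the root to the leaves.

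After Step~II the invariant is immediate. Here $T_3=\hat T_1$ and $k=f+\tfrac{\ve}{2}$, which is a constant shift of a function already satisfying the merge-tree definition.

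For Step~III I will check three things.
\begin{itemize}
\item Each trimmed subtree $|\hat{T'}_{2,w^A}^g|$ is a connected subtree of $\Ttwoh$ hanging below $w^A$. Its only image point is $w^A$, so removing it and regluing it at $v_1$ creates no cycle. The result is a tree still rooted at the root of $\hat T_1$.
\item Continuity at the gluing point: by (S1) applied to the $\ve$-good map,
\[
g(w^A)=f(v_1)+\ve,\qquad\text{so}\qquad k(v_1)=f(v_1)+\tfrac{\ve}{2}=g(w^A)-\tfrac{\ve}{2}.
\]
This agrees with the value $k=g-\tfrac{\ve}{2}$ assigned on the attached part.
\item Monotonicity: $g$ is strictly decreasing away from $w^A$, so $k$ is strictly decreasing along the new edges as well.
\end{itemize}

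For Step~IV I will argue locally around each group $S$. The operation identifies the points $v'_1,\dots,v'_m$, which all have the same value $k(v_i)+\ve$, into a single point $V'_1$. It keeps the single path $\pi(v_S,v'_1)$ and discards the parallel paths, mapping them to points of equal $k$-value.
\begin{itemize}
\item The identified points have equal $k$, and the retained path is a subpath of a monotone path. Hence $k$ descends to a well-defined continuous function on the quotient, which is still strictly monotone along root-to-leaf paths.
\item The quotient of a tree by identifying points on distinct branches at the same height, and then deleting the redundant parallel arcs above them, is again a tree. The remaining subtrees below each $v'_i$ now hang from $V'_1$.
\item Processing groups in increasing order of function value ensures later merges act on the already modified tree consistently. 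Each merge moves a branching point downward but never below any of its children, since the children's values are at most $k(v_i)<k(v_i)+\ve$.
\end{itemize}

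For the running-time part, Steps~II--IV each touch every vertex of the augmented trees a constant number of times, apart from sorting. The augmented trees have polynomial size $O(n^2)$ by \cite{FAfCGHaIDbT}, so these steps cost polynomial time, dominated by the $O(n^3 2^{\mathsf b}\mathsf b^{\mathsf b+1})$ bound of the slower FPT algorithm quoted from \cite{FAfCGHaIDbT}. When $\mathsf b$ is bounded this is polynomial.

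I expect the main obstacle to be Step~IV: proving rigorously that the repeated identifications yield a tree, and not a graph with cycles or a root that is no longer unique, when several groups $S$ have overlapping paths to their NCAs. My first approach is to phrase Step~IV as a quotient by an equivalence relation that identifies only points of equal $k$-value. I would then invoke the standard fact that such a quotient of a merge tree, closed upward (if $x\sim y$ then their ancestors at equal height are also identified), is again a merge tree.
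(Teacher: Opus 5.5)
The paper gives no proof of this observation; it is asserted and then relied on implicitly, for example by Observation~\ref{obs:pleqq} and the continuity claims in the appendix. So there is no competing route to compare with. Your step-by-step invariant argument supplies what the paper omits, and its structure is sound.

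Step~III is handled correctly, with continuity at the gluing point coming from (S1) as you say. For Step~IV, your quotient reading is the right way to make the algorithm's text precise. The ``standard fact'' you want is true, and you can include its short proof:
\begin{enumerate}
\item The relation identifies finitely many closed segments $\pi(v'_j,v_S)$ pointwise at equal $k$-values. So the quotient is a finite connected graph on which $k$ descends continuously and stays strictly monotone on edges.
\item Each point of a merge tree has one upward direction, and identified points have identified ancestors. So each class has a single upward direction.
\item The lowest vertex of a cycle would need two upward edges, so no cycle exists.
\end{enumerate}
A composition of such quotients is again one, so the processing order is irrelevant for the tree property. The same picture gives Observation~\ref{obs:pleqq} immediately, since the quotient map sends monotone paths to monotone paths.

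Two small fixes are needed. First, ``with the same neighbors as $v'_1$'' must be read as reattaching the children of every $v'_j$ to $V'_1$; otherwise part of $\Toneh$ disappears and $\gamma$ is not defined on all of $|\Tone|$. Second, the children of $V'_1$ lie strictly below $k(v_i)+\ve$, not at most $k(v_i)$, because vertices may occur between $v_i$ and $v'_j$.

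The one step that does not go through as written is the running time. ``Polynomial'' does not imply ``dominated by $O(n^3 2^{\mathsf b}\mathsf b^{\mathsf b+1})$''; for bounded $\mathsf b$ that bound is just $O(n^3)$. Your claim that Step~IV touches every vertex a constant number of times is also unsupported, since a vertex can lie on the ascending paths $\pi(v_i,v_S)$ of many groups. Replace it with a direct count:
\begin{enumerate}
\item Every vertex of $\Toneh$ belongs to exactly one group $S=\alpha^{-1}(w)$.
\item All vertex heights of $T_3^k$ come from $O(n)$ values: those of the augmented trees (suitably shifted), the trimming heights, and the heights $k(v_i)+\ve$. So every ascending path has $O(n)$ vertices.
\item Hence computing each $v_S$, locating the $v'_j$, and performing the identifications costs $O(n^3)$ in total over the $O(n^2)$ augmented vertices.
\item Steps~II and~III cost $O(n^2)$, and the sorting costs $O(n^2\log n)$.
\end{enumerate}
This $O(n^3)$ bound, not mere polynomiality, is what makes Step~I dominant.
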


\begin{theorem}\label{the:directmain}
If $d_{ID}(\Tone, \Ttwo)\leq \ve$, and $T_3^k$ is constructed as described in Algorithm \ref{alg:MeanTree}, then $d_{ID}(T_1^f, T_3^k)\leq \frac{\ve}{2}$ and $d_{ID}(T_2^g, T_3^k)\leq \frac{\ve}{2}$.
\end{theorem}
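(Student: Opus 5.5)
The plan is to prove the two bounds separately. For each, we exhibit an explicit $\tfrac{\ve}{2}$-good map and then invoke the equivalence theorem of \cite{FAfCGHaIDbT}, together with Theorem~\ref{Theo:improve}, so that only (S1), (S2') and (S3) need to be checked. Throughout we use that $\hat T_1$ and $\hat T_2$ are subdivisions of $T_1^f$ and $T_2^g$, so they have the same interleaving distances. We also use the surjective map $\gamma$ of Equation~(\ref{eq:gamma}) built during Algorithm~\ref{alg:MeanTree}, and we take from Observation~\ref{obs:kcontinuous} that $T_3^k$ is a merge tree.

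\textbf{Bound $d_{ID}(T_1^f,T_3^k)\le \tfrac{\ve}{2}$.} We take $\phi=\gamma|_{|\Toneh|}:|\Toneh|\to|T_3^k|$. By Step~II, $k(\gamma(p))=f(p)+\tfrac{\ve}{2}$. Steps~III and~IV only identify points at equal $k$-value or rename $v_1$ as $w^A$, so (S1) holds. Continuity follows because identifications along the paths $\pi(v_S,v'_i)$ are level-preserving.

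For (S2'), suppose $\phi(u_1)=\phi(u_2)$ with $u_1\ne u_2$. Then $u_1,u_2$ were merged in Step~IV, and so lie above some $v'_i,v'_j$ on paths to the same $v_S$. Both therefore lie on paths descending from $v_S$, at the same height, and at most $\ve$ above the level where the $v'$'s live. Their $\ve$-ancestors (that is, $2\cdot\tfrac{\ve}{2}$) then coincide in $\Toneh$. I expect to verify this using the $\ve$-good condition (S2') for $\alpha$, which forces $u_1^{2\ve}=u_2^{2\ve}$, combined with the observation that the merge in Step~IV happens exactly at height $\ve$ above the $v_i$. This part needs care: the NCA in $\Toneh$ of $u_1,u_2$ may be higher than $u_1^{\ve}$. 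So I will instead interpret the required condition inside $T_3^k$ through the equivalence with interleaved maps. That is, I define $\psi:|T_3^k|\to|\Toneh|$ by sending $q$ to the ancestor at height $k(q)$ of any preimage. I then check (C2) and (C4) directly: $\psi(\phi(u))=u^{\ve}$ holds because merged points share $\ve$-ancestors after Step~IV.

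For (S3), points outside $\mathrm{Img}(\phi)$ are exactly the trimmed subtrees added in Step~III. Their depth below $w^A$ is at most $\ve$ by construction (the $g$-depth is at most $\ve$, and $k$ is $g$ shifted by $-\tfrac{\ve}{2}$). This is exactly $2\cdot\tfrac{\ve}{2}$, as required.

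\textbf{Bound $d_{ID}(T_2^g,T_3^k)\le \tfrac{\ve}{2}$.} We define $\beta':|T_3^k|\to|\Ttwoh|$ in two pieces. On $\gamma(|\Toneh|)$ we set $\beta'(\gamma(p))=\alpha(p)$. This is well defined because Step~IV only merges points having equal $\alpha$-images: they descend from points of $S$ with a common $\alpha$-value at equal heights, so they coincide by continuity. On the attached subtrees we set $\beta'(\gamma(w))=w$.

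Then (S1) is immediate: $g(\alpha(p))=f(p)+\ve=k(\gamma(p))+\tfrac{\ve}{2}$, and $g(w)=k(w)+\tfrac{\ve}{2}$. For (S3), the points of $\Ttwoh$ missing from $\mathrm{Img}(\beta')$ are exactly the parts of the subtrees removed in Step~III, those with $g<g(w^A)-\ve$. Their nearest image ancestor lies in the trimmed copy, at most $\ve$ above them.

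For (S2'), suppose $\beta'(q_1)=\beta'(q_2)$. Then $q_1,q_2$ have preimages in some $S$-fiber of $\alpha$. Step~IV guarantees that the NCA of such points in $T_3^k$ lies at height at most $\ve$ above them. This is the main obstacle. One must check that repeated applications of Step~IV, processed in increasing order of function value, do not interfere with each other, and that Step~III attachments do not create new coincidences. The case of two preimages lying in non-identical fibers whose NCAs were shortened at different stages needs an induction on the processing order. I would first try to show that after processing level $t$, every pair of points at height at most $t$ with equal $\alpha$-image has $T_3$-NCA within $\ve$, and that later merges only lower NCAs.
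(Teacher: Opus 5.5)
Your route is the paper's: the same two maps ($\gamma$ restricted to $|\Toneh|$, and $\beta'$ equal to $\alpha\circ\gamma^{-1}$ on $\gamma(|\Toneh|)$ and the identity on the Step~III attachments), checked against (S1), (S2'), (S3) via Theorem~\ref{Theo:improve}. Your plan for (S2') of the second bound is sound, and it supplies what the paper's bare appeal to Step~IV leaves implicit. The plan is to process fibers by increasing height and note that each later gluing is a height- and order-preserving quotient, so NCAs in $T_3^k$ can only move down.

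Two small omissions there. First, in (S3) you claim a removed point lies at most $\ve$ below its nearest image ancestor. This needs (S3) for $\alpha$: the hanging subtree has depth at most $2\ve$ below $w^A$, and the cut is at $g(w^A)-\ve$. Without it, removed points could be arbitrarily deep. Second, you should record continuity of $\beta'$ (the paper's Claim~\ref{cl:betap}). It follows because $\alpha$ factors through the quotient $\gamma$ and the two pieces agree at $w^A=\gamma(v_1)$.

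The step that does not work as written is the detour through $\psi$ in the first bound. Your worry there is unfounded, and the direct argument you first sketched is the right one. Any $u_1\neq u_2$ with $\gamma(u_1)=\gamma(u_2)$ are joined by a chain of level-wise identifications. Each one pairs ancestors $x,y$, at a common height $h\ge f(v_i)+\ve$, of two points $v_i,v_j$ of one fiber $S$, since Step~IV only glues from $k(v_i)+\ve$ upward. Because $\alpha$ is $\ve$-good, $v_i^{2\ve}=v_j^{2\ve}$, so $f(\mathrm{NCA}(v_i,v_j))\le f(v_i)+2\ve\le h+\ve$. Hence $x^{\ve}=y^{\ve}$, and transitivity gives $u_1^{\ve}=u_2^{\ve}$, which is (S2') for $\phi$.

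The detour could not have helped in any case. Well-definedness of $\psi$ and (C2) require exactly that all $\gamma$-preimages of a point share their $\ve$-ancestor, which is the same statement. Moreover, $\psi$ is mis-specified in two ways:
\begin{itemize}
\item To satisfy (C3) and give $\psi(\phi(u))=u^{\ve}$, it must send $q$ to the ancestor at $f$-height $k(q)+\tfrac{\ve}{2}$, not $k(q)$.
\item It is undefined on the Step~III attachments, which have no preimage in $\hat T_1$. There, take the ancestor of $v_1$ at that height, which exists since $k(q)\ge k(w^A)-\ve$.
\end{itemize}
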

\begin{proof}
    Proof is in Appendix \ref{app:the:directmain}.
\end{proof}

\section{Illustrative Example}\label{sec:example}
In this section, we consider two leaf datasets presented in \cite{MiSoTS} (See Figure~\ref{fig:merge_tree_average}). The data are provided as PNG images, which we annotate in order to extract the corresponding merge trees. We use average geodesic distance as the scalar field for which the merge tree is computed. For this scalar field we get maxima at all the end points and minimum at approximately in the middle of the leaves. After computing the merge trees for each dataset, we then compute the average merge tree between them. The result is plotted in Figure~\ref{fig:merge_tree_average}.

\begin{figure}[htbp]
  \centering
\hspace{-0.15\textwidth}  % shift both images slightly left
\begin{minipage}{0.44\textwidth}
    \includegraphics[width=\textwidth, angle=180]{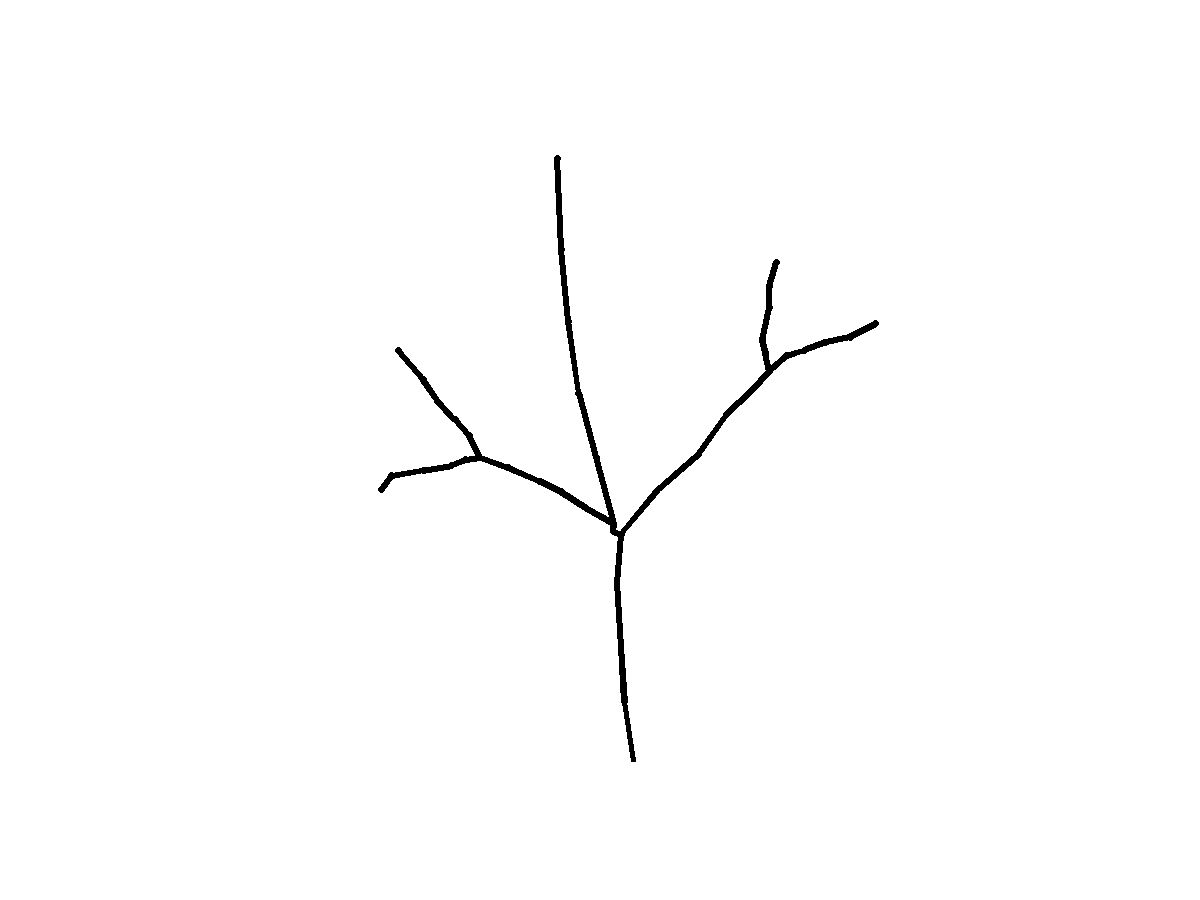}
\end{minipage}
\hspace{0.15\textwidth}  % space between left and right images
\begin{minipage}{0.44\textwidth}
    \centering
    \includegraphics[width=\textwidth, angle=180]{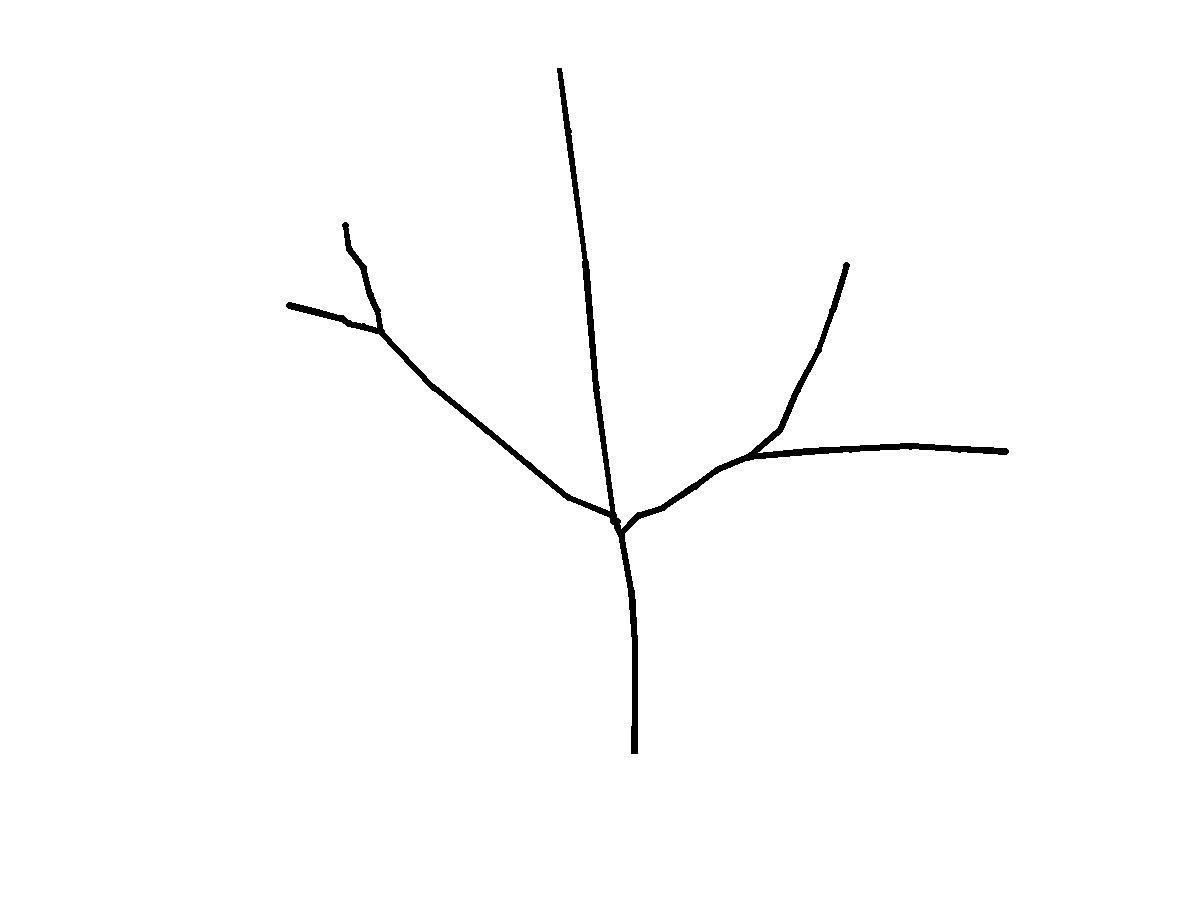}
\end{minipage}

  %\vspace{1em}  % space between top and bottom rows

  % -------- Bottom row: merge trees and colorbar --------
  \begin{minipage}{0.90\textwidth}
    \centering
    \includegraphics[width=\textwidth]{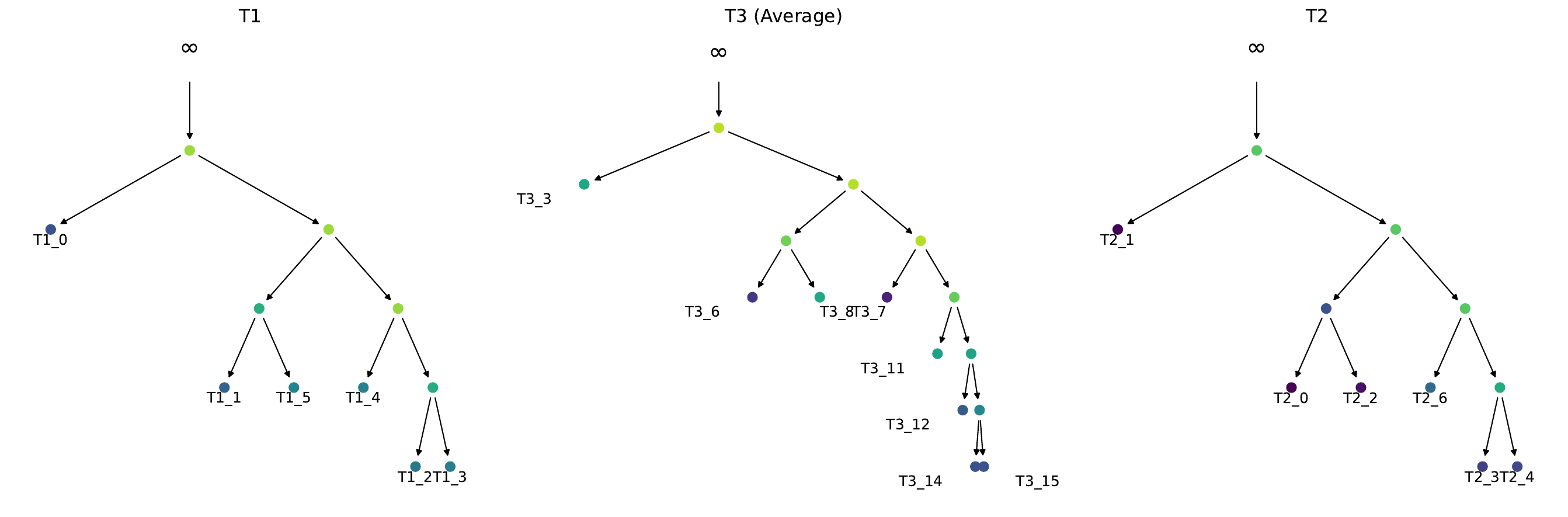}
  \end{minipage}
  \hfill
  \begin{minipage}{0.05\textwidth}
    \centering
    \includegraphics[width=\textwidth]{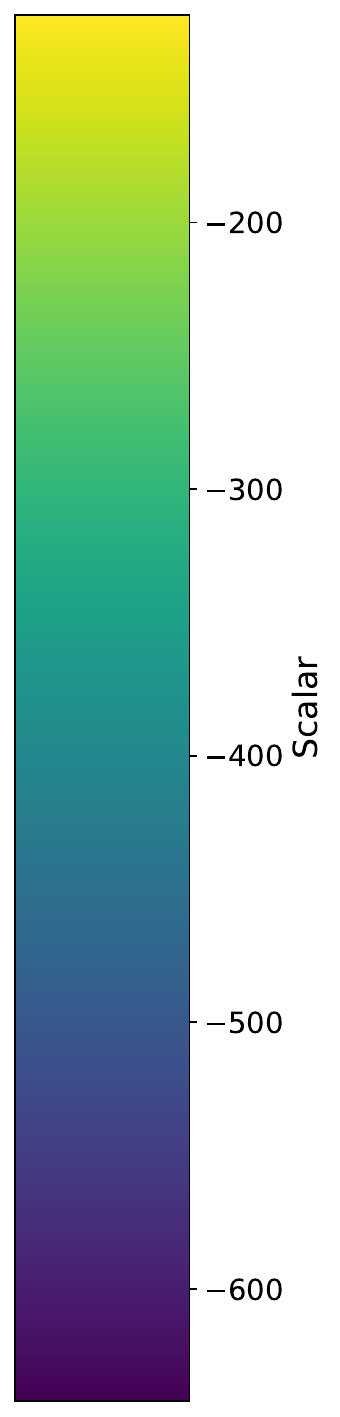}
  \end{minipage}

 \caption{Top: the left and right trees show the first two leaf datasets from \cite{MiSoTS}. 
         Bottom: the corresponding input merge trees $T_1$ and $T_2$, with the associated colorbar and their average merge tree. We display only the labels of the leaf nodes.}
  \label{fig:merge_tree_average}
\end{figure}

\begin{comment}

\begin{figure}[htbp]
  \centering

  {
    \includegraphics[width=0.45\textwidth, angle=180]{Fig/leaf_1.png}
  }
%\hspace{0.02\textwidth}
  {
    \includegraphics[width=0.45\textwidth, angle=180]{Fig/leaf_2.png}
  }
  \caption{ The left and right trees correspond to the first two leaf datasets from \cite{MiSoTS}.}
  \label{fig:leaf1_2}
\end{figure}

\begin{figure}[htbp]
  \centering

  % -------- Two side-by-side figures --------
  \begin{minipage}{0.90\textwidth}
    \centering
    \includegraphics[width=\textwidth]{Fig/trees.pdf}
  \end{minipage}
  \hfill
  \begin{minipage}{0.05\textwidth}
    \centering
    \includegraphics[width=\textwidth]{Fig/bar.pdf}
  \end{minipage}

  \caption{The input merge trees $T_1$ and $T_2$.}
  \label{fig:merge_tree_inputs}
\end{figure}

\end{comment}

\section{Conclusions and Discussions} \label{S:conclusion}
In this paper, we studied the problem of computing an average merge tree for a given pair of merge trees under the interleaving distance. \revision{This is the first time that an average merge tree, computed from a set of merge trees using the interleaving distance, has been constructed.} We also demonstrated how our method computes the average merge tree of two leaf datasets. \revision{The time complexity of the algorithm is determined by the time required to compute the $\varepsilon$-good map, or equivalently, the interleaving distance between the two trees, and in some cases, FPT algorithms can be employed.}
In the future, we plan to extend this approach to find an average merge tree for a set of merge trees, while ensuring the method is efficient.

\bibliography{eurocg26_example}

\appendix

\vspace{0.5cm}

\section{Proof of Theorem \ref{the:directmain}}\label{app:the:directmain}
If $\gamma$ is the map defined in \eqref{eq:gamma} and constructed by Algorithm~\ref{alg:MeanTree}, then the following properties hold.

\begin{observation} \label{obs:intersection}
    If \(p \in T_3^k\), if \(|\gamma^{-1}(p)|>1\) then for any point in \(\gamma^{-1}(p)\) it belongs to \(\Toneh\). Moreover, if \(\gamma^{-1}(p)\in |\Tone|\cap |\Ttwo|\) then $p\in \text{Img}(\alpha).$
\end{observation}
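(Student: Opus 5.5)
The plan is to argue directly from the construction of $\gamma$ in Algorithm~\ref{alg:MeanTree}, following every step at which a point of $T_3^k$ receives a preimage. I will track, for each $p\in|T_3^k|$, where the points of $\gamma^{-1}(p)$ come from.

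\textbf{Classifying the sources of $\gamma$.} I first record the three ways a point gets a $\gamma$-value.
\begin{enumerate}
\item In Step~II, $\gamma(p)=p$ for $p\in|\Toneh|$.
\item In Step~III, $\gamma(w)=w$ for $w$ in a trimmed subtree $|{\hat{T'}}_{2,w^A}^g|$, together with the reassignment $\gamma(v_1)=w^A$.
\item In Step~IV, $\gamma(p)=p'$ for points on the merged paths $\pi(v_S,{v'}_i)$, and $\gamma(p)=q$ for the removed points.
\end{enumerate}
Step~II alone is a bijection. Step~III only adds new points with singleton preimages, apart from the root, which is glued to $v_1$. The trimmed subtrees for different roots $w^A$ are disjoint, and apart from their roots they lie outside $\mathrm{Img}(\alpha)$. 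Hence no two points of $|{T'}_{A,2}^g|$ are sent to the same point, and no non-root point of such a subtree is identified with anything else.

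\textbf{First claim.} The only identifications are therefore made in Step~IV, where several points ${v'}_1,\dots,{v'}_m$ and their path points are merged. All of these lie on paths from some $v_i\in S\subseteq|\Toneh|$ up to $v_S$, and both ends of each path are in $\Toneh$. I must check that no Step~III attached subtree sits strictly inside such a path. If one did, its root $w^A$ would lie on the path and have a preimage from $T_2$. I plan to handle this by noting that Step~III replaces $v_1$ by $w^A$ in $V(T_3)$. The point $w^A=\gamma(v_1)$ is therefore also the image of a $\Toneh$ point, so even in this case every element of a nonsingleton preimage can be taken in $\Toneh$, after identifying $w^A$ with its partner $v_1$. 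This gives the first sentence.

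\textbf{Second claim.} If $\gamma^{-1}(p)$ meets both trees, the only source is the Step~III gluing $\gamma(v_1)=w^A$. Here $w^A$ is by definition the nearest ancestor lying in $\mathrm{Img}(\alpha)$, with $\alpha(v_1)=w^A$. So $p$ corresponds to a point of $\mathrm{Img}(\alpha)$, which is the required conclusion.

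The main obstacle is the bookkeeping in the first claim: showing that the Step~IV merges never involve points of the attached $T_2$ subtrees other than the glued roots. I would first try to argue from function values. In Step~IV only points with $k$ at most $k(v_S)$ on paths from $\Toneh$ vertices are removed, and attached subtrees hang downward from their root, so they are never on an upward path between $\Toneh$ points.
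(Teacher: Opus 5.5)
Your proposal is correct and takes essentially the paper's approach. The paper states this observation without proof, as an immediate consequence of how $\gamma$ is built, and your step-by-step tracking of Steps II--IV is exactly that inspection.

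Your identification of the glued root $w^A$ with $v_1$ is the same convention the paper relies on when it speaks of $|\Tone|\cap|\Ttwo|$; if $w^A$ were kept as a separate $T_2$-point of the domain, $\gamma^{-1}(w^A)$ would contain it and the first sentence would fail. Your ``main obstacle'' also closes at once: each attached subtree meets the rest of $T_3^k$ only at its root, so no path between two points of the $\Toneh$ part can pass through its interior.
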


\begin{observation}\label{obs:pleqq}
For any pair $p_1, p_2\in |\Toneh|$, if $p_1\preceq p_2$, then we have $\gamma(p_1)\preceq \gamma(p_2)$ in $|\Tthree|$. 
\end{observation}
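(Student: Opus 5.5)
We prove Observation~\ref{obs:pleqq} by following a pair $p_1\preceq p_2$ in $|\Toneh|$ through the three modifications of Algorithm~\ref{alg:MeanTree}. The idea is that $\gamma$ restricted to $|\Toneh|$ is a composition of elementary maps, and each one sends the tree path $\pi(p_1,p_2)$ onto a monotone path towards the root. Since $p_1\preceq p_2$, the unique path $\pi(p_1,p_2)$ in $|\Toneh|$ is strictly increasing in $f$, so it is enough to check that each step keeps this path an ancestor chain.

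\textbf{Step II} is the identity on the underlying tree: it only shifts the function value by $\tfrac{\ve}{2}$, so $\preceq$ is trivially preserved. \textbf{Step III} only attaches trimmed subtrees below the vertex $v_1$, which is relabelled $w^A$. It does not change the edges among points of $|\Toneh|$, so the ancestor relation on the image of $|\Toneh|$ is unchanged. The one point to note is that $\gamma(v_1)=w^A$ is the same point with a new name, so ancestors of $v_1$ in $\Toneh$ remain ancestors of $\gamma(v_1)$.

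\textbf{Step IV} is the only step that identifies points, and it needs a short case analysis for a single merge operation at a set $S$ with nearest common ancestor $v_S$ and cut level $k(v_i)+\ve$. Write $\phi$ for this operation on the current tree. It maps each path $\pi(v_S,v'_i)$ level-preservingly onto $\pi(v_S,v'_1)$, fixes every other point, and turns $v'_1,\dots,v'_m$ into one vertex $V'_1$, whose children are the union of their children. There are three cases.
\begin{itemize}
\item If neither $p_1$ nor $p_2$ lies on a collapsed path, then $\phi$ fixes both. The path between them either avoids the collapsed region or passes through some $v'_i$ and then through $v_S$. In the latter case its image passes through $V'_1$ and then along $\pi(V'_1,v_S)$, which is still monotone.
\item If $p_1$ lies on a collapsed path and $p_2$ does not, then $p_2$ is $v_S$ or an ancestor of $v_S$. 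Since $\phi(p_1)\in\pi(V'_1,v_S)$, we get $\phi(p_1)\preceq v_S\preceq p_2$.
\item If both lie on collapsed paths, they lie on the same path $\pi(v_S,v'_i)$, because $p_1\preceq p_2$ and these paths below $v_S$ are disjoint. The map $\phi$ preserves $k$-values along that path and sends it monotonically onto $\pi(v_S,v'_1)$, so $k(p_1)\le k(p_2)$ gives $\phi(p_1)\preceq\phi(p_2)$.
\end{itemize}
A descendant of $v'_i$ becomes a descendant of $V'_1$, so it is covered by the first two cases.

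Finally, $\gamma|_{|\Toneh|}$ is the composition of these operations over the sorted sets $S$, and $\preceq$ is transitive, so the conclusion follows by induction on the number of merges performed.

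The main obstacle is making the induction well-defined. Later merges act on a tree that earlier merges have already altered, and the pseudocode ``remove $p$ and redefine $\gamma(p)=q$ with the same function value'' has to be read as the map $\phi$ above. I would state explicitly that each $\phi$ is the quotient map that identifies equal-height points on the paths $\pi(v_S,v'_i)$. Such a quotient of a tree by identifying equal-height points on sibling paths below a common ancestor is again a tree, and its quotient map is monotone. This makes the three-case check apply at every stage.
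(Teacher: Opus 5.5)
Your strategy is the right one: factor $\gamma|_{|\Toneh|}$ into the elementary operations of Steps II--IV and check that each preserves $\preceq$. The paper states this observation with no argument at all, and your treatment of Steps II and III is fine. The gap is in the Step IV case analysis, which is not exhaustive.

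A collapsed path $\pi(v'_i,v_S)$ with $i>1$ can contain an interior branch point $y$, with $v'_i\prec y\prec v_S$ (for example a saddle of $\Toneh$). Such a point can carry a side branch that contains no element of $S$.
\begin{itemize}
\item If $p_1$ lies in that branch, its ancestor path enters the collapsed region at $y$. So the dichotomy in your first case (``avoids the collapsed region or passes through some $v'_i$ and then through $v_S$'') is false.
\item If in addition $p_2\in\pi(y,v_S)$, you are in the configuration ``$p_1$ off, $p_2$ on a collapsed path''. This is none of your three cases; your closing remark only covers descendants of the $v'_i$ themselves.
\item Your description of $\phi$ as fixing every other point does not say where such a branch goes. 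It has to be re-hung at the point of $\pi(V'_1,v_S)$ of height $k(y)$, which is what the ``insert extra vertices along $\pi(v_S,v'_1)$'' line of the algorithm does.
\end{itemize}
Separately, the disjointness invoked in your third case is false. For $|S|\ge 3$, two elements of $S$ can have a nearest common ancestor $x$ with $k(v_i)+\ve<k(x)<k(v_S)\le k(v_i)+2\ve$, which (S$2'$) allows. Their paths then share $\pi(x,v_S)$. That case's conclusion survives for a different reason: the ancestors of $p_1$ form a chain, so an ancestor $p_2$ with $k(p_2)\le k(v_S)$ lies on $\pi(p_1,v_S)$. Finally, your last paragraph asserts that the quotient map is monotone. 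That is exactly what the incomplete check was meant to establish, so it cannot close the gap by itself.

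The clean repair is to drop the case analysis and use one general fact. If $\phi$ is a continuous map between merge trees with $k'(\phi(x))=k(x)+c$ for a constant $c$, then $p_1\preceq p_2$ implies $\phi(p_1)\preceq\phi(p_2)$. The reason is as follows. The path $\pi(p_1,p_2)$ is one along which the height strictly increases, and its image is again such a path, starting at $\phi(p_1)$. Every point of a merge tree has a unique upward direction, so such a path can only run through ancestors of $\phi(p_1)$.

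This fact applies to the shift in Step II, the inclusion in Step III, and each quotient map in Step IV. Each quotient map is continuous by construction, and the quotient is again a merge tree, as you observe. Side branches and shared path segments are then handled automatically.

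Obtain continuity step by step in this way, not from Claim~\ref{claim:Betacontionuous}. The paper's proof of that claim itself uses Observation~\ref{obs:pleqq}. Your stepwise decomposition is precisely what avoids this circularity.
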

\begin{claim}\label{claim:Betacontionuous}
    The map $\beta:|T_1^f| \rightarrow |T_3^k|$ where
    \(
    \beta(p) = \gamma(p)
    \) is continuous.
\end{claim}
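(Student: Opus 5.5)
The plan is to follow $\gamma$ restricted to $|\Toneh|$ (which has the same underlying space as $|T_1^f|$) through the steps of Algorithm~\ref{alg:MeanTree}. I will show that each step either leaves the current map unchanged or composes it with a continuous quotient map, and that the new vertices attached in Step III lie outside $\gamma(|\Toneh|)$ and do not affect the restriction.

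\textbf{Step II.} After Step II, $\gamma$ is the identity from $|\Toneh|$ onto $|T_3^k| = |\Toneh|$; only the function $k$ changes, by a constant shift. So $\beta$ is a homeomorphism at this stage. Topologically, $|T_3^k|$ is the geometric realization of a finite graph, and a map out of a finite CW-complex is continuous if and only if its restriction to each closed edge is continuous. I will verify continuity edge by edge.

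\textbf{Step III.} Here the vertex $v_1$ is renamed to $w^A$, and the trimmed subtree $|\hat{T'}^g_{2,w^A}|$ is glued at this vertex. The underlying space of the old tree is a closed subspace of the new one, and the inclusion of a subcomplex is continuous. Setting $\gamma(v_1) = w^A$ changes only the label of the same point. Hence $\beta$ stays continuous: it is the previous $\beta$ followed by the inclusion.

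\textbf{Step IV.} This is the only step that changes the topology of the image. For each set $S$ with common NCA $v_S$, the paths $\pi(v'_i, v_S)$ for $i = 1, \dots, m$ all have the same $k$-range $[k(v_i)+\ve, k(v_S)]$. Each path is an arc on which $k$ is strictly monotone. The redefinition of $\gamma$ sends each $p \in \pi(v'_i, v_S)$ to the unique $p' \in \pi(v'_1, v_S)$ with $k(p') = k(p)$. I will show this is exactly the quotient map $q$ that identifies the $m$ arcs by their common height parametrization. Since $k$ restricted to each arc is a homeomorphism onto an interval, the map
\[
p \mapsto \bigl(k|_{\pi(v'_1, v_S)}\bigr)^{-1}\bigl(k(p)\bigr)
\]
is continuous on each arc. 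It equals the identity at the shared endpoint $v_S$, and it sends the $v'_i$ to the merged vertex $V'_1$. The subdivision in item 5 of Step IV only adds degree-two vertices and does not change the space. The quotient map $q$ is continuous on the complement of the arcs (where it is the identity) and on each closed arc. The pasting lemma over this finite closed cover then gives continuity of $q$. Composing over all sets $S$, processed in increasing order of function value, gives $\beta = q_N \circ \cdots \circ q_1 \circ \iota$, which is continuous.

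\textbf{Main obstacle.} I expect the hardest part to be the interaction between different sets $S$ in Step IV, and between Step IV and Step III. An arc $\pi(v'_i, v_S)$ may already have been partially identified at an earlier, lower iteration, or may contain a vertex $w^A$ to which a Step III subtree was glued. To handle this, I would first note that each $q_j$ is defined on the current tree, not on $|\Toneh|$. Since a composition of continuous maps is continuous, I only need each $q_j$ to be well defined on the current space. That holds because each $q_j$ identifies arcs that are monotone in $k$ and share the top endpoint $v_S$. Any earlier identification on such an arc is carried along, because the identification is determined by $k$-values alone. Any glued subtree hangs below a point of the arc and is mapped identically by $q_j$, so it does not break continuity.
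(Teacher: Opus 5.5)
Your argument is correct, but it takes a different route from the paper.

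\textbf{The paper's route.} The paper argues metrically and locally. It uses $d(u,p)=f(\mathrm{NCA}(u,p))-\min\{f(u),f(p)\}$ and shrinks the ball around $u$ until it contains no other vertex of $\Toneh$, so that $u$ and $p$ are comparable. Observation~\ref{obs:pleqq} (order preservation of $\gamma$) together with $k\circ\gamma=f+\tfrac{\ve}{2}$ then gives $d_{\Tthree}(\beta(u),\beta(p))=|f(p)-f(u)|<r$. This is short, but all of its content sits in Observation~\ref{obs:pleqq}, which the paper states without proof.

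\textbf{Your route.} You factor $\beta=q_N\circ\cdots\circ q_1\circ\iota$ into an inclusion followed by height-preserving zipping quotients. This checks continuity directly against what Steps III and IV do to the space. It also gives order preservation as a corollary: in a merge tree, a continuous path along which $k$ strictly increases can only run from a point to one of its ancestors. So your route supplies the justification the paper's route leaves implicit.

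\textbf{What to tighten.} The pasting step in Step IV is loose.
\begin{itemize}
\item The complement of the arcs is open, not closed, so it cannot be part of a closed cover.
\item $q$ is not literally the identity there. A branch hanging from an interior point $x$ of $\pi(v'_i,v_S)$ with $i>1$ is reattached at the point of $\pi(v'_1,v_S)$ of height $k(x)$. Likewise, the subtree below $v'_i$ is reattached at $V'_1$.
\item For the same reason, the vertices inserted in item 5 need not have degree two.
\end{itemize}
Instead, paste over $U=\bigcup_i\pi(v'_i,v_S)$ and the closures of the components of the complement of $U$. The two definitions agree at the attachment points precisely because the identification depends only on $k$. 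Alternatively, use that the quotient projection is continuous by definition, and note that this quotient of a finite graph is exactly the tree the algorithm builds.
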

\begin{proof}
Similar to the proof of Claim-A in~\cite{FAfCGHaIDbT}, we consider a ball 
\[
B_r^\circ(u, \Toneh) = \{\, p \in \Toneh \mid d_\Toneh(u, p) < r \,\},
\]
where 
\[
d_\Toneh(u,p) = \max_{x \in \pi_{\Toneh}(u,p)} f(x) - \min_{x \in \pi_{\Toneh}(u,p)} f(x),
\]
and $\pi_{\Toneh}(u,p)$ denotes the unique path between the two points $u$ and $p$ in $\Toneh$.  
As $\Toneh$ is a merge tree, we have 
\[
d_\Toneh(u,p) = f(\text{NCA}(u,p)) - \min\{ f(u), f(p) \}.
\]
Similarly, we define $d_\Tthree(w,q)$ for $\Tthree$.

\medskip

Now, we want to prove that for any $R > 0$, there exists at least one $r > 0$ such that if $d_\Toneh(u,p) < r$, then $d_\Tthree(\beta(u), \beta(p)) < R$.
We fix $R$, and we consider (similar to Claim-A in \cite{FAfCGHaIDbT}) $r < R$ to be sufficiently small such that the ball $d_\Toneh(u,p) < r$ contains no vertices in $\Toneh$ other than $u$.  
In this case, $\text{NCA}(u,p) = u$ or $p$.

\medskip

If $\text{NCA}(u,p) = p$, then based on Observation~\ref{obs:pleqq}, we have $\beta(u) \prec \beta(p)$. Consequently,
\[
\text{NCA}(\beta(u), \beta(p)) = \beta(p)
\quad \text{and} \quad 
\min\{ k(u), k(p) \} = k(u).
\]
Therefore,
\[
\begin{aligned}
d_\Tthree(\gamma(u), \gamma(p))
  &= k(\text{NCA}(\gamma(u), \gamma(p))) - \min\{ k(\gamma(u)), k(\gamma(p)) \} \\
  &= k(\gamma(p)) - k(u) \\
  &= f(p) - f(u) < r < R.
\end{aligned}
\]

If $\text{NCA}(u,p) = u$, we can use a similar method.
\end{proof}

\begin{theorem}\label{the:direct}
If $d_{ID}(\Tone, \Ttwo)\leq \ve$, and $T_3^k$ is constructed as described in Algorithm \ref{alg:MeanTree}, then $d_{ID}(T_1^f, T_3^k)\leq \frac{\ve}{2}.$
\end{theorem}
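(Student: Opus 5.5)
We prove the bound by exhibiting an $\tfrac{\ve}{2}$-good map from $|\Tone|$ to $|\Tthree|$ and then invoking the theorem of \cite{FAfCGHaIDbT} on the equivalence between good maps and interleaved pairs. The candidate map is the one from Claim~\ref{claim:Betacontionuous}, namely $\beta(p)=\gamma(p)$ for $p\in|\Toneh|=|\Tone|$. By Claim~\ref{claim:Betacontionuous} it is already continuous. It therefore remains to check (S1), the simplified condition (S2') of Theorem~\ref{Theo:improve}, and (S3), all with $\tfrac{\ve}{2}$ in place of $\ve$.

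\textbf{Condition (S1).} Step II sets $k(\gamma(p))=f(p)+\tfrac{\ve}{2}$. We must confirm that Steps III and IV preserve this identity on $\gamma(|\Toneh|)$.

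In Step III, $v_1$ is replaced by $w^A$. Here $\gamma(v_1)$ keeps the value $k=f(v_1)+\tfrac{\ve}{2}$, because $g(w^A)=f(v_1)+\ve$ gives $g(w^A)-\tfrac{\ve}{2}=f(v_1)+\tfrac{\ve}{2}$. Thus the two prescriptions for $k$ agree on the glued point.

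In Step IV, points are only identified, or re-targeted to points of equal $k$-value. Hence (S1) is preserved.

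\textbf{Condition (S2').} Suppose $\beta(u_1)=\beta(u_2)$ with $u_1\neq u_2$. By Observation~\ref{obs:intersection}, both preimages lie in $\Toneh$. Points of $\Toneh$ are identified only in Step IV. There, $u_1$ and $u_2$ lie on paths $\pi(v_S,v'_i)$ and $\pi(v_S,v'_j)$ and have the same $k$-value.

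After the merge, the segments of these paths at height at least $k(v_i)+\ve$ are collapsed into the single path $\pi(v_S,v'_1)$. So $u_1^{\ve}$ and $u_2^{\ve}$, taken in $\Toneh$, are sent to the same point of $\Tthree$.

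What (S2') actually needs is that $u_1^{2\cdot\ve/2}=u_2^{2\cdot\ve/2}$ in $\Tone$. However, (S2) as used in the good-map theorem compares ancestors in the source tree. I would therefore reformulate: it suffices to take $\beta$ in the other direction, or to check the interleaving conditions (C1)–(C4) directly.

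\textbf{Plan for (S2'), revised.} Build the reverse map $\delta:|\Tthree|\to|\Tone|$ by $\delta(q)=(\gamma^{-1}(q))^{\ve/2}$ on the $\Toneh$-part. On the attached trimmed subtrees from Step III, send $q$ to $v_1^{\,\ve/2+(g(w^A)-g(q))}$ shifted appropriately. Then verify (C1)–(C4) for the pair $(\beta,\delta)$ with parameter $\tfrac{\ve}{2}$.

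(C1) and (C3) are height bookkeeping. (C2), $\delta(\beta(u))=u^{\ve}$, holds because all points merged in Step IV share the common ancestor at height $k(v_i)+\ve$ by construction. (C4), $\beta(\delta(q))=q^{\ve}$, requires the following for points $q$ on attached subtrees: the point $v_1$ sits at most $\ve$ above $q$, because the trimming in Step III removed everything more than $\ve$ below $w^A$. This is exactly why the trimming threshold is $\ve$.

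\textbf{Main obstacle.} The hardest step is (C4), or equivalently (S3), for the points of $\Tthree$ created in Steps III and IV. These are the points outside $\mathrm{Img}(\beta)$, namely the attached subtrees ${T'}_{2,w^A}^g$. For them I would show that the nearest ancestor in $\mathrm{Img}(\beta)$ is $\gamma(v_1)$ and that the $k$-gap is at most $\ve=2\cdot\tfrac{\ve}{2}$, which follows from the trimming rule $g(q)\ge g(w^A)-\ve$.

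A second point needs care: the merging in Step IV must not create new identifications below height $k(v_i)+\ve$, which would break (C2). Here I would rely on Theorem~\ref{Theo:improve} applied to $\alpha$. It guarantees that nodes with a common $\alpha$-image have NCA within $2\ve$, so the merge is always performed above them.
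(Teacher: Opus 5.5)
Your choice of map $\beta=\gamma$ and your checks of (S1) and (S3) match the paper's proof. The argument breaks at (S2'). What you first show, that the images of $u_1^{\ve}$ and $u_2^{\ve}$ agree in $\Tthree$, is not what (S2') requires. Rather than closing that gap, you switch to verifying (C1)--(C4), and that construction fails as written.

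Since $f(\gamma^{-1}(q))=k(q)-\tfrac{\ve}{2}$, your $\delta(q)=(\gamma^{-1}(q))^{\ve/2}$ has $f$-value $k(q)$ instead of $k(q)+\tfrac{\ve}{2}$. So (C3) fails, and $\delta(\beta(u))=u^{\ve/2}\neq u^{\ve}$, contrary to your own (C2) claim. Moreover, $\gamma^{-1}(q)$ can contain several points. Take $q=V'_1$, whose preimages $v'_1,\dots,v'_m$ lie at height $f(v_i)+\ve$. If $f(v_S)>f(v_i)+\tfrac{3\ve}{2}$, their $\tfrac{\ve}{2}$-ancestors are distinct, so $\delta$ is not even well defined.

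Correcting the shift to $\delta(q)=(\gamma^{-1}(q))^{\ve}$ repairs the heights. But this $\delta$ is well defined only if all points identified by $\gamma$ share their $\ve$-ancestor in $\Tone$, which is precisely (S2'). So the detour does not avoid the step you skipped.

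That step is short, and it is what the paper does. Points of $\Toneh$ are identified by $\gamma$ only in Step IV. There, points $p\in\pi(v_S,v'_i)$ and ${p'}\in\pi(v_S,v'_j)$ with $f(p)=f({p'})\ge f(v_i)+\ve$ are glued, and their NCA in $\Tone$ is $v_S$ or a descendant of it. All $v_i\in S$ have the same $\alpha$-image, so by (S2') for the $\ve$-good map $\alpha$ they share their $2\ve$-ancestor. This gives $f(v_S)\le f(v_i)+2\ve\le f(p)+\ve$, hence $p^{\ve}={p'}^{\ve}$ in $\Tone$. Chains of identifications follow by transitivity.

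The paper phrases this by contradiction: if $f(p_1)+\ve<f(\mathrm{NCA}_{p_1,p_2})$, then $k(\beta(p_1))+\ve<k(\beta(\mathrm{NCA}_{p_1,p_2}))$, which Step IV excludes. You state exactly the needed $2\ve$ bound in your last paragraph but apply it to the wrong condition. Used in (S2'), it completes the good-map argument with no reverse map at all.
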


\begin{proof}
We use the same map that we defined in Claim \ref{claim:Betacontionuous}. Since $d_{ID}(T_1^f, T_2^g)\leq \ve$, there exists an $\ve$-good map 
\[
\alpha:|T_1^f| \rightarrow |T_2^g|
\]
that satisfies the three conditions of the interleaving distance.  

We define the map $\beta:|T_1^f| \rightarrow |T_3^k|$ by
\[
\beta(p) = \gamma(p).
\] 
Since $\beta$ is continuous (as prove in Claim \ref{claim:Betacontionuous}), it remains to verify that the three required conditions are satisfied.

\text{(S1)} By the construction of the algorithm, all points in $|T_1^f|$ are shifted upward by $\tfrac{\ve}{2}$.  
Therefore, for every $p \in |T_1^f|$,
\[
k(\gamma(p)) = f(p)+\tfrac{\ve}{2},
\]
which satisfies the first condition.

\text{(S2')} Suppose $\beta(p_1) = \beta(p_2)$.  
We must show that $p_1^\ve = p_2^\ve$.  

Assume, for the sake of contradiction, that
\[
\gamma(p_1) = \gamma(p_2) = w \quad (\beta(p) = \gamma(p))
\]
but $p_1^\ve \neq p_2^\ve$.  

Since $\beta$ is continuous, $\beta(\operatorname{NCA}_{p_1,p_2})$ is an ancestor of $w$.  
By assumption of the contradiction,
\[
f(p_1) + \ve < f(\operatorname{NCA}_{p_1,p_2}).
\]
From condition (1),
\[
k(\beta(p_1)) + \tfrac{\ve}{2} < k(\operatorname{NCA}_{p_1,p_2}) - \tfrac{\ve}{2},
\]
which implies
\[
k(\beta(p_1)) + \ve < k(\beta(\operatorname{NCA}_{p_1,p_2})).
\]
This contradicts Step IV of the construction of $T_3^k$ in Algorithm~\ref{alg:MeanTree}.

\text{(S3)} Let $q\in|T_3^k|\setminus \operatorname{Img}(\beta)$.  
We need to show that $|k(q^A)-k(q)|\leq \ve$.  

Assume $q\in T_{2,v^A}^g$.  
In Step III of the algorithm, the subtree $T_{2,v^A}^g$ in $T_2^g$ is trimmed so that for every point 
$q \in {T'}_{2,v^A}^g$ the inequality
\[
|g(v^A)-g(q)|\leq \ve
\]
holds.  

For each point $p\in {T'}_{2,v^A}^g$, we assign
\[
k(p) = g(p)-\tfrac{\ve}{2}.
\]
Thus,
\[
|k(v^A)-k(q)| 
= \Big| \big(g(v^A)-\tfrac{\ve}{2}\big) - \big(g(q)-\tfrac{\ve}{2}\big) \Big|
= |g(v^A)-g(q)| \leq \ve.
\]

This completes the proof.
\end{proof}

\begin{claim}\label{cl:betap}
    We construct a map ${\beta}': |\Tthree| \rightarrow |\Ttwo|$ as follows:  

    - For points $p \in |\Tthree|$ such that $\gamma^{-1}(p)\subset |\Tone|$, define
    \[
    \beta'(p) = \alpha(\gamma^{-1}(p)).
    \]
    
    - For other points $p\in |\Tthree|$ such that $\gamma^{-1}(p)\in |\Ttwoh|$, define
    \[
    \beta'(p) = p.
    \]
    $\beta'$ is continuous.
\end{claim}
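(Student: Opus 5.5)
The plan is to show first that $\beta'$ is well defined, and then that it is continuous separately on the two closed pieces of $|\Tthree|$ produced by Step III. We glue these pieces using the pasting lemma, and we measure continuity in the function-induced metric $d_\Tthree$ of Claim~\ref{claim:Betacontionuous}.

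\textbf{Well-definedness.} On the part of $|\Tthree|$ coming from $\Toneh$, a point $p$ may have several preimages under $\gamma$. These arise only from the merging in Step IV, and by Observation~\ref{obs:intersection} all of them lie in $\Toneh$. Step IV merges only points $v'_1,\dots,v'_m$, and the points of the paths $\pi(v_S,v'_i)$, that have equal $k$-value and descend from a set $S$ with $\alpha(v_i)=w$ for all $v_i\in S$. Every point $x$ on such a path lies between some $v_i$ and $v_i^{2\ve}$, so $\alpha(x)$ is the ancestor of $w$ at height $f(x)+\ve$. This ancestor is the same for all merged points. Hence $\alpha(\gamma^{-1}(p))$ is a single point, and $\beta'(p)$ is well defined. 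At the gluing vertex $v_1$, which Step III replaced by $w^A$, the two definitions agree, since $\alpha(v_1)=w^A$.

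\textbf{Continuity on each piece.} On the attached trimmed subtrees $|{\hat{T'}}_{2,w^A}^g|$, the map $\beta'$ is the inclusion into $|\Ttwo|$. It shifts values uniformly by $\ve/2$ and preserves paths, so $d_\Ttwo(\beta'(p),\beta'(q))=d_\Tthree(p,q)$ and $\beta'$ is continuous there. On the $\Toneh$-part we have $\beta'\circ\gamma=\alpha$, and $\alpha$ is continuous. Before Step IV, $\gamma$ is the identity with a shift, so $\beta'$ is continuous there as well. After Step IV, $\gamma$ restricted to $\Toneh$ is a quotient map: it identifies the finitely many equal-height path segments $\pi(v_S,v'_i)$ with $\pi(v_S,v'_1)$. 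Because $\alpha$ is constant on the fibres of this identification, the universal property of the quotient topology makes $\beta'$ continuous.

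For a direct metric version in the style of Claim~\ref{claim:Betacontionuous}, fix $R>0$ and a point $p$. Choose $r<R$ so small that the ball $B_r^\circ(p,\Tthree)$ contains no vertex other than possibly $p$. Then every $q$ in the ball satisfies $q\preceq p$ or $p\preceq q$. We pick preimages $\tilde p,\tilde q\in\gamma^{-1}$ on a common monotone path, which is possible because merged segments were identified height by height. Then $d_\Toneh(\tilde p,\tilde q)=|k(p)-k(q)|<r$, and continuity of $\alpha$, together with the fact that $\alpha$ raises heights by exactly $\ve$, gives $d_\Ttwo(\beta'(p),\beta'(q))<R$.

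\textbf{Main obstacle.} The hardest part is showing that preimages can be chosen coherently near the merged vertices $V'_1$ and near $v_S$. A neighbourhood of $V'_1$ in $\Tthree$ pulls back to a union of neighbourhoods of $v'_1,\dots,v'_m$. We need $\alpha$ to send all of these into one small neighbourhood of $\alpha(v'_1)$. This follows from continuity of $\alpha$ at each of the finitely many $v'_i$, together with the well-definedness step, and it is where finiteness of the merged sets is essential.
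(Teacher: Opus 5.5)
Your proof is correct, but it is organized differently from the paper's. The paper gives a direct $\varepsilon$--$\delta$ argument in the height-induced metric, modelled on Claim-A of \cite{FAfCGHaIDbT}. It fixes $R$ and picks a vertex-free ball of radius $r<R$. It then splits on where the nearby point lies: in a grafted $\Ttwoh$-subtree, the uniform shift by $\ve/2$preserves the metric; in the $\Toneh$-part, it invokes Observation~\ref{obs:pleqq} and continuity of $\alpha$.

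You instead first prove that $\beta'$ is well defined. You show that $\alpha$ is constant on the fibres of $\gamma$ created by Step IV, and that both definitions agree at the grafting vertices. You then obtain continuity structurally. The grafted subtrees are isometrically included. On the $\Toneh$-part, $\beta'\circ\gamma=\alpha$ with $\gamma$ a quotient map. The finitely many closed pieces are glued by the pasting lemma.

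Your route makes explicit two points the paper's proof passes over. First, the paper writes $\gamma^{-1}(p)\preceq\gamma^{-1}(u)$ as if $\gamma$ were injective, and never checks that $\alpha(\gamma^{-1}(p))$ is a single point. Second, neighbourhoods straddling a grafting vertex are only partly treated in its Case~2 (the second situation of subcase 2 is not handled), whereas the pasting lemma disposes of them at once.

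The cost is that you need $\gamma|_{\Toneh}$ to be a quotient map. The cleanest justification is that it is a continuous surjection (Claim~\ref{claim:Betacontionuous}) from the compact space $|\Toneh|$ onto a Hausdorff finite tree, hence a closed map. The paper's argument, by contrast, stays elementary and within the same metric framework as Claim~\ref{claim:Betacontionuous}. Your ``direct metric version'' essentially recovers it, with the useful extra care of choosing comparable preimages.

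Two small remarks on your well-definedness step. For fibre-constancy you only need that $\alpha$ sends ancestors of $v_i$ to ancestors of $\alpha(v_i)=w$, which follows from continuity and the exact $\ve$ shift in (S1); the bound by $v_i^{2\ve}$ is not needed. If Step IV performs several merges, this argument should be applied inductively, merge by merge.
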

\begin{proof}
Based on the definition of continuity of a map and what was explained in Claim-A in~\cite{FAfCGHaIDbT}, for any $R > 0$ and any point $u \in \Tthree$, we should find an $r > 0$ such that if $p \in \Tthree$ and $p \in B^{\circ, r}_\Tthree(u)$, then $\beta'(p) \in B^{\circ, R}_\Ttwoh(\beta'(u))$. Here,
\[
B^{\circ, r}_\Tthree(u)
= \bigl\{\, p : \max_{y \in \pi_{u,p}} k(y) - \min_{y \in \pi_{u,p}} k(y) < r \,\bigr\},
\]
where $\pi_{u,p}$ is the unique path in the tree between $u$ and $p$. As $\Tthree$ is a merge tree, $\max_{y \in \pi_{u,p}} k(y) = \text{NCA}(u,p)$, and $\min_{y \in \pi_{u,p}} k(y)$ is either $u$ or $p$.

\medskip

To do so, (based on Claim-A in~\cite{FAfCGHaIDbT}) we fix $R > 0$ and consider $0 < r < R$ sufficiently small such that $B^{\circ, r}_\Tthree(u)$ contains no vertex from $V(\Tthree)$. Moreover, let $p \in B^{\circ, r}_\Tthree(u)$. Two cases may occur:

\medskip
\noindent
\textbf{(Case 1)} $p \in |\Ttwoh|$. In this case, $u \in |\Ttwoh|$ or $u \in |\Ttwoh| \cap |\Toneh|$. 
\begin{enumerate}
\item If $g(p)>g(u)$, then
\[
\max_{y \in \pi_{u,p}} g(y)
= \max_{y \in \pi_{u,p}} k(y) - \tfrac{\ve}{2}
= g(p),
\quad\text{and}\quad
\min_{y \in \pi_{u,p}} g(y)
= \min_{y \in \pi_{u,p}} k(y) - \tfrac{\ve}{2}
= g(u).
\]
\item If $g(p) < g(u)$,
\[
\max_{y \in \pi_{u,p}} g(y)
= \max_{y \in \pi_{u,p}} k(y) - \tfrac{\ve}{2}
= g(u),
\quad\text{and}\quad
\min_{y \in \pi_{u,p}} g(y)
= \min_{y \in \pi_{u,p}} k(y) - \tfrac{\ve}{2}
= g(p).
\]
\end{enumerate}
Therefore, in both cases, $\beta'(p) \in B^{\circ, r}_\Tthree(u)$.

\medskip
\noindent
\textbf{(Case 2)} $\gamma^{-1}(p) \subset |\Toneh|$. In this case, we have two situations:
\begin{enumerate}
\item If $p \preceq u$, then $\gamma^{-1}(u) \in |\Toneh|$ and based on Observation~\ref{obs:pleqq}, $\gamma^{-1}(p) \preceq \gamma^{-1}(u)$.  
As the map $\alpha$ is continuous, we have $\beta'(p) \in B^{\circ, R}_\Ttwoh(\beta'(u))$.

\item If $u \preceq p$, then $u$ may or may not belong to $|\Ttwoh|$.  
In the first situation, $u \in |\Toneh|$ as well; therefore, we can use the same method as in~(1) to show that $\beta'(p) \in B^{\circ, R}_\Ttwoh(\beta'(u))$.
\end{enumerate}

\end{proof}

\begin{theorem}\label{the:otherdirection}
If $d_{ID}(T_1^f, T_2^g)\leq \ve$, and $T_3^k$ is constructed as described in Algorithm \ref{alg:MeanTree}, then $d_{ID}(T_3^k, T_2^g)\leq \frac{\ve}{2}.$
\end{theorem}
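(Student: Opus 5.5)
We will exhibit an $\tfrac{\ve}{2}$-good map from $|\Tthree|$ to $|\Ttwo|$, namely the map $\beta'$ of Claim~\ref{cl:betap}, whose continuity is already established there. By Theorem~\ref{Theo:improve} it then suffices to verify the three conditions (S1), (S2') and (S3) with parameter $\tfrac{\ve}{2}$, i.e. with $2\cdot\tfrac{\ve}{2}=\ve$ in the last two conditions.

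\textbf{(S1).} We split according to the two cases in the definition of $\beta'$.
\begin{enumerate}
\item If $\gamma^{-1}(p)\subset|\Toneh|$, pick $u\in\gamma^{-1}(p)$. Then $k(p)=f(u)+\tfrac{\ve}{2}$ by Step II, which Step IV preserves since it only merges points of equal $k$-value. Hence
\[
g(\beta'(p))=g(\alpha(u))=f(u)+\ve=k(p)+\tfrac{\ve}{2}.
\]
\item If $p$ comes from a trimmed subtree ${T'}_{2,w^A}^g$, Step III sets $k(p)=g(p)-\tfrac{\ve}{2}$, and $\beta'(p)=p$ gives the same identity.
\end{enumerate}
We must also check that $\beta'$ is well defined when $|\gamma^{-1}(p)|>1$. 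By Observation~\ref{obs:intersection} all preimages lie in $\Toneh$. Step IV merges only points lying on paths from vertices $v_i$ with a common image $\alpha(v_i)=w$, at equal height. Since $\alpha$ is a good map, the images of these points at equal height above $w$ coincide: each is the ancestor of $w$ at that height. So $\alpha$ is constant on each fibre of $\gamma$.

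\textbf{(S3).} Let $w\in|\Ttwo|\setminus\operatorname{Img}(\beta')$. Either $w\notin\operatorname{Img}(\alpha)$ or not. Points of $\operatorname{Img}(\alpha)$ are hit through $\gamma$ on $\Toneh$, so we are in the first case. Let $w^A$ be the nearest ancestor of $w$ in $\operatorname{Img}(\alpha)$. By Step III every point of the subtree at $w^A$ within height $\ve$ of $w^A$ was copied into $\Tthree$ and is fixed by $\beta'$. Hence $w$ lies strictly more than $\ve$ below $w^A$. Its nearest ancestor in $\operatorname{Img}(\beta')$ is the point of the trimmed branch exactly at height $g(w^A)-\ve$. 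By (S3) for $\alpha$, we have $g(w^A)-g(w)\le 2\ve$, so this nearest ancestor is at most $\ve$ above $w$, as required.

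\textbf{(S2'), the main obstacle.} Suppose $\beta'(p_1)=\beta'(p_2)=w$. We need to show that $p_1^{\ve}=p_2^{\ve}$ in $\Tthree$. There are three cases.
\begin{enumerate}
\item Both $p_1,p_2$ come from $\Toneh$. Then their preimages $u_1,u_2$ satisfy $\alpha(u_1)=\alpha(u_2)$, so they belong to a common set $S$ handled in Step IV. That step merged the points $\ve$ above them in $k$-value into a single vertex $V'_1$. Hence their $\ve$-ancestors in $\Tthree$ coincide. The delicate point is that later merges in the sorted processing of Step IV must not split earlier ones. We would argue this by Observation~\ref{obs:pleqq}: merges only identify points and never separate ancestors, so once $\gamma(u_1)^{\ve}=\gamma(u_2)^{\ve}$ holds, it persists.
\item Both come from trimmed subtrees. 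Then $p_1=w=p_2$ directly, since $\beta'$ is the identity there.
\item Mixed case: $p_1$ in a trimmed subtree and $p_2$ from $\Toneh$. This cannot happen, because trimmed points are not in $\operatorname{Img}(\alpha)$, while $\beta'(p_2)$ is.
\end{enumerate}
The hardest step is case 1, together with the well-definedness of $\beta'$. Both hinge on a careful invariant of Step IV: after processing each set $S$, for all $u,u'$ in the same $\alpha$-fibre, $\gamma(u)$ and $\gamma(u')$ have a common ancestor exactly $\ve$ above them. I would prove this invariant by induction on the sorted order of Step IV, and then read off (S2') and conclude $d_{ID}(\Tthree,\Ttwo)\le\tfrac{\ve}{2}$.
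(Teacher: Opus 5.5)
Your proposal is correct and follows the paper's proof essentially step for step. You use the same map $\beta'$ from Claim~\ref{cl:betap}, the same two-case check of (S1), the same three-case split for (S2') with the trimmed-only and mixed cases excluded as in the paper, and (S3) via the $\ve$-trimming of Step~III. Your additions are the well-definedness of $\beta'$ on the fibres of $\gamma$, the persistence invariant for Step~IV, and the explicit bound $g(w^A)-\ve-g(w)\le 2\ve-\ve$ in (S3); these make the argument more careful than the paper's sketch without changing the route. One small correction: the root $w^A$ of a trimmed subtree does lie in $\operatorname{Img}(\alpha)$, since it is $\gamma(v_1)$, so it belongs to your case~1 rather than the mixed case.
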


\begin{proof}

We use the same map, which is constructed in Claim \ref{cl:betap}. 
    We use the same map, which is constructed in Claim \ref{cl:betap}. It has been proven that the map is continuous.
Next, we aim to demonstrate that  conditions \text{(S1)}, \text{(S2')}, and \text{(S3)} required for the interleaving distance are fulfilled.

\text{(S1)} We need to prove that
\[
g(\beta'(p)) = k(p)+\tfrac{\ve}{2}.
\]

- If $p \in T_3^k \cap T_1^f$, then
\[
g(\beta'(p)) = g(\alpha(\gamma^{-1}(p))) 
= f(\gamma^{-1}(p)) + \ve 
= f(p) + \ve 
= k(p) + \tfrac{\ve}{2}.
\]

- If $p \in T_3^k \cap T_2^g$, then
\[
g(\beta'(p)) = g(p) = k(p) + \tfrac{\ve}{2}.
\]

\text{(S2')} We must show that if $\beta'(p_1) = \beta'(p_2)$ with $p_1 \neq p_2$, then $p_1^{\ve} = p_2^{\ve}$.  

Using the map $\gamma$ constructed during the algorithm, there exist points $p'_1, p'_2$ such that  
\[
p'_1 \neq p'_2, \quad \gamma(p'_1) = p_1, \quad \gamma(p'_2) = p_2.
\]

There are three possible situations for $p'_1$ and $p'_2$:

(1) $p'_1, p'_2 \in T_1^f$.  

In this case, since $\beta'(\gamma(p'_1)) = \beta'(\gamma(p'_2))$, by the construction of $\beta'$ we have $\alpha(p'_1) = \alpha(p'_2)$.  
By Step IV of the algorithm, $p'_1$ and $p'_2$ are mapped to a point in $T_3^k$ where their distance to their NCA is less than $\ve$.

(2) $p'_1, p'_2 \in T_2^g$.  

This situation cannot occur. By the construction of $\beta'$, we would have $\beta'(p'_1) = \beta'(p'_2)$, which implies $p'_1 = p'_2$.  
Thus $p_1 = p_2$, contradicting the assumption that $p_1 \neq p_2$.

(3) $p'_1 \in T_1^f$ and $p'_2 \in T_2^g$.  

This situation cannot happen. If $\beta'(\gamma(p'_1)) = \beta'(\gamma(p'_2))$, then by definition
\[
\beta'(\gamma(p'_1)) = \alpha(p'_1), 
\qquad \beta'(\gamma(p'_2)) = p'_2.
\]
But $p'_2$ lies outside the image of $\alpha$ in $T_2^g$, giving a contradiction.

\text{(S3)} We must show that for any $v \in T_2^g \setminus \operatorname{Img}(\beta')$, the distance to its ancestor $v^A$ does not exceed $\ve$.  

If $w \in T_2^g \setminus \operatorname{Img}(\beta')$, then for any  
\[
p \in \gamma^{-1}(\beta'^{-1}(w^A))
\]
we have $p \in |T_1^f|\cap|T_2^g|$.  

By Observation~\ref{obs:intersection}, $p \in \operatorname{Img}(\alpha)$, and by Step III of Algorithm~\ref{alg:MeanTree}, condition (3) is satisfied.

This completes the proof.
\end{proof}

\end{document}